\newcommand{\ZZ}{{\mathbb Z}}
\newcommand{\integer}{\mathbb{Z}}
\newcommand{\real}{\mathbb{R}}
\newcommand{\E}{\mathbb{E}}
\newcommand{\bbR}{\mathbb{R}}
\newcommand{\bbN}{\mathbb{N}}
\newcommand{\HH}{\mathbf{H}}
\newcommand{\Hv}{\underline{H}}
\newtheorem{theorem}{Theorem}[section]
\newtheorem{remark}[theorem]{Remark}
\newcommand{\bbP}{{\Bbb P}}
\newcommand{\bbE}{{\Bbb E}}
\newcommand{\Var}{\textnormal{Var\hspace{0.5mm}}}
\newcommand{\Cov}{\textnormal{Cov}}
\newcommand{\diag}{\textnormal{diag}}
\DeclareMathOperator*{\plim}{\mathit{p}-lim}
\begin{document}

\title{Multivariate selfsimilarity:
Multiscale eigen-\\ structures for selfsimilarity parameter estimation}

\author{Charles-G\'erard Lucas$^{(1)}$, Gustavo Didier$^{(2)}$, Herwig Wendt$^{(3)}$, \IEEEmembership{IEEE Senior},
Patrice Abry$^{(1)}$,~\IEEEmembership{IEEE Fellow},\thanks{Supported by  PhD Grant DGA/AID (no 01D20019023) and ANR-18-CE45-0007 MUTATION and the Simons Foundation collaboration grant $\#714014$. }
\\
\makebox{} \\
$^{(1)}$  ENSL, CNRS, Laboratoire de physique, F-69342 Lyon, France, {\tt \{patrice.abry,charles.lucas\}@ens-lyon.fr}\\
$^{(2)}$ Math. Dept., Tulane University, New Orleans, USA, {\tt gdidier@tulane.edu}, \\
$^{(3)}$  IRIT, Univ. Toulouse, CNRS, Toulouse, France, {\tt herwig.wendt@irit.fr}.
}

\maketitle

\begin{abstract}
Scale-free dynamics, formalized by selfsimilarity, provides a versatile paradigm massively and ubiquitously used to model temporal dynamics in real-world data.
However, its practical use has mostly remained univariate so far.
By contrast, modern applications often demand multivariate data analysis. Accordingly, models for multivariate selfsimilarity were recently proposed. Nevertheless, they have remained rarely used in practice because of a lack of available reliable estimation procedures for the vector of selfsimilarity parameters. Building upon recent mathematical developments, the present work puts forth an efficient estimation procedure {based on} the theoretical study of the multiscale eigenstructure of the wavelet spectrum of multivariate selfsimilar processes. The estimation performance is studied theoretically in the asymptotic limits of large {scale and} sample sizes, and computationally for finite-size samples. As a practical outcome, a fully operational and documented multivariate signal processing estimation  toolbox is made freely available and is ready for practical use on real-world data.
Its potential benefits are illustrated in epileptic seizure prediction from multi-channel EEG data.
\end{abstract}

\begin{IEEEkeywords}
Multivariate selfsimilarity, selfsimilarity parameter vector, wavelet transform, multiscale eigen-structure.
\end{IEEEkeywords}

\IEEEpeerreviewmaketitle

\section{Introduction}
\label{sec.intro}

\noindent {\bf Context.}
The concept of scale-free dynamics results from a change in paradigm: instead of being driven by a few characteristic time scales, thus targeted in estimation, temporal dynamics results from the interactions of a large continuum of time scales, hence all of equal relevance. Scale-free dynamics has been massively used in the past decades to successfully model temporal activities in a large variety of data stemming from several real-world applications very different in nature, ranging from natural phenomena --- physics (hydrodynamic turbulence \cite{Frisch1995}, astrophysics \cite{davis1992cosmic}), geophysics (rainfalls, \cite{hubert2002multifractal}, seismology \cite{turcotte1990implications}), biology (plant growths \cite{palmer1988fractal}, body rhythms \cite{Lopes2009}, heart rate \cite{Ivanov2007,Nakamura2016}, neurosciences \cite{He2014,la2018self}, genomics \cite{takahashi1989fractal}) --- to human activity --- Internet traffic \cite{fontugne2017scaling}, finance \cite{mandelbrot99}, city growth \cite{lengyel2023roughness}, art investigations \cite{AbrySPM2015}.

However, scale-free dynamics has so far remained restricted to univariate settings only, mostly due to a lack of multivariate models and estimation tools.
By contrast, in most 
applications, one same system is monitored by several sensors, which naturally calls for the joint (or multivariate) analysis of a collection of time series. Multivariate selfsimilarity with the practical and accurate estimation of the vector of selfsimilarity parameters thus constitutes the heart of the present work. \\
\noindent {\bf Related work.} Fractional Brownian motion (fBm)  \cite{Samorodnitsky1994,pipiras:taqqu:2017}, the only Gaussian, selfsimilar, stationary-increment process, has quasi-exclusively been used in practice as the reference model for scale-free dynamics.
The selfsimilarity parameter $H$ is the quantity of interest in applications. Its accurate estimation has thus received considerable attention (cf.\ \cite{Bardet2003a,pipiras:taqqu:2017} for reviews). Notably, multiscale (wavelet) representations have proven to yield accurate and reliable estimation procedures for $H$ \cite{wendt2017multivariate}: They rely on the crucial facts that the statistics of the wavelet coefficients behave as power laws with respect to scales, and that the exponents of these power laws are controlled by $H$. To account for the multivariate nature of data in applications, multivariate extensions of fBm were recently proposed based on gathering correlated fBm \cite{Amblard_P-0_2011_j-ieee-tsp_imfbm}.
Multivariate multiscale-based representations were accordingly devised for the estimation of the resulting vector of selfsimilarity parameters. Nonetheless, these constructs were essentially based on component-wise univariate power law behavior \cite{Coeurjolly_J-F_2013_ESAIM_wamfbm,wendt2017multivariate}.
More recently, a richer model for multivariate selfsimilarity, operator fractional Brownian motion (ofBm), was proposed. Its
versatility stems from a major change of perspective: there is no longer a one-to-one association either between each selfsimilarity parameter and each component, or between power laws and scale-free dynamics.
Instead, the statistics of each component depend on the entire vector of selfsimilarity parameters \cite{mason:xiao:2002,Didier_G_2011_bernouilli_irpofbm,didier:pipiras:2011:exp}.
Very recently, the principles of the estimation for the full vector of selfsimilarity parameters were theoretically discussed, based on the eigenvalues of multivariate wavelet spectra estimated at each scale \cite{abry2018waveleta,abry2018waveletb}.
This paved the way for preliminary explorations  \cite{lucas2021bootstrap} that lead to the core contributions of the present work:
 The construction of practical statistical signal processing tools for multivariate selfsimilarity analysis. \\
\noindent {\bf Goals, contributions and outline.} {The present work aims to define a reliable and theoretically well-grounded estimation procedure for the vector of selfsimilarity parameters in a multivariate setting  
and to establish its good performance both in theory and in practice.} 
To that end, Section~\ref{sec.OFBM} recalls state-of-the-art multivariate selfsimilarity modeling and analysis. As a first contribution, a {tractable} 
yet richer model for multivariate selfsimilarity is devised in Section~\ref{sec.MFBM}, and then compared to ofBm; its multivariate wavelet analysis is detailed.
As a second and major contribution, an estimation procedure for the vector of selfsimilarity parameters is devised, by exploiting a bias-variance trade-off, and its asymptotic performance is theoretically studied (cf. Section~\ref{sec.WOFBM}).
Finite-size estimation performance is studied empirically (from Monte Carlo simulations), together with the derivation of approximations to the  covariance structure of the estimates, which is of critical importance for real-world applications (cf. Section~\ref{sec.MC}).
The estimation performance of the proposed multivariate estimator is compared, in terms of biases, variances, MSE, and covariance structures, to an earlier multivariate estimator defined in \cite{abry2018waveleta,abry2018waveletb} and also to the univariate estimator defined in \cite{wendt2017multivariate}.
The potential benefits of the proposed eigen-wavelet estimation procedures are illustrated by means of epileptic seizure prediction from multichannel EEG data.
A documented toolbox  
for both the synthesis of multivariate selfsimilar processes and for the estimation of the vector of selfsimilarity parameters is publicly available at \url{github.com/charlesglucas/ofbm_tools}.

\section{Multivariate selfsimilarity: State of the art}
\label{sec.OFBM}

\subsection{Model: Operator Fractional Brownian Motion  (ofBm)}
\label{sec:def}

\subsubsection{Definition}
\label{sec:ofbmdef}

Operator Fractional Brownian Motion, originally proposed in \cite{mason:xiao:2002,Didier_G_2011_bernouilli_irpofbm,didier:pipiras:2011:exp}, is defined as a jointly Gaussian M-variate process
via the harmonizable representation
\begin{equation}
\label{equ:ofbmdef}
{\cal B}_{\HH,\mathbf{A}}(t) := \int_\real \frac{e^{itf}-1}{if} \left(f_+^{-  \HH + \frac{1}{2}\mathbb{I} } \mathbf{A} + f_-^{- \HH + \frac{1}{2}\mathbb{I}   } \overline{\mathbf{A}} \right) \tilde B(\mathrm{d}f).
\end{equation}
In \eqref{equ:ofbmdef}, $i^2=-1$, $ \mathbb{I} $ denotes the $ M \times M $ identity matrix,
 $\tilde B(\mathrm{d}f) $ a complex-valued Gaussian Hermitian random measure, such that  $ \E \tilde B(\mathrm{d}f) \tilde B(\mathrm{d}f)^\star = \mathrm{d}f$ (with $ ^\star $ the Hermitian transposition). Also,
$\mathbf{A}$ and $\HH$ are two $ M \times M $ complex- and real-valued matrices, respectively.
Furthermore, hereinafter it is also assumed that \cite{Didier_G_2011_bernouilli_irpofbm}: \\
\noindent {\sc Assumption OFBM1}: The Hurst matrix has the form
\begin{equation}
\HH = {\mathbf W} \diag(H_1, \hdots, H_M){\mathbf W}^{-1},
\end{equation}
with ${\mathbf W}$ a $ M \times M $ real-valued invertible matrix, and where
\begin{equation}\label{e:0<H1=<...=<HM<1}
0 < H_1 \leq \hdots \leq H_M < 1.
\end{equation}

\noindent {\sc Assumption OFBM2}: $\mathbf{A}\mathbf{A}^\star$ is positive definite, i.e., $\det \Re(\mathbf{A}\mathbf{A}^\star) > 0$. \\
\noindent {\sc Assumption OFBM3}: $\mathbf{A}\mathbf{A}^\star$ is a real-valued matrix.

OFBM2 ensures that  ${\cal B}_{\HH,\mathbf{A}}$  is a well-defined process~;
OFBM3 restricts ${\cal B}_{\HH,\mathbf{A}}$ to the special case of \emph{time-reversible} (or covariant under the change $t \rightarrow -t$) statistical properties~;
${\mathbf W}$ in OFBM1 is referred to as the mixing matrix and provides an important source of versatility of the model.

From Definition~\eqref{equ:ofbmdef}, the covariance matrix is obtained as:
\begin{equation}
\label{equ:ofbmcov}
\begin{aligned}
 \E \left[ {\cal B}_{\HH,\mathbf{A}}(t) {\cal B}_{\HH,\mathbf{A}}(s)^\star \right] = & \int_\real  \frac{(e^{itf}-1)(e^{-isf}-1)}{\vert if \vert^2} \\
& \left( f_+^{-\left( \HH-\frac{1}{2}\mathbb{I} \right) } \mathbf{A} \mathbf{A}^\star f_+^{- \left( \HH^\star-\frac{1}{2}\mathbb{I} \right) } \right.   \\
& \left. + f_-^{-\left( \HH-\frac{1}{2}\mathbb{I} \right) } \overline{\mathbf{A} \mathbf{A}^\star} f_-^{- \left( \HH^\star-\frac{1}{2}\mathbb{I} \right) }  \right)  \mathrm{d}f, \\
\end{aligned}
\end{equation}
where $\E$ stands for the ensemble average.

\subsubsection{Multivariate selfsimilarity}

Importantly, ofBm satisfies the general multivariate selfsimilarity relation \cite{Didier_G_2011_bernouilli_irpofbm}
\begin{equation}
\label{equ:ssmulti}
 \{{\cal B}_{\HH, \mathbf{A}}(t) \}_{t \in \real}  \stackrel{fdd}{=} \{a^{\HH} {\cal B}_{\HH, \mathbf{A}}(t/a)\}_{t \in \real}, \quad  \forall a > 0,
 \end{equation}
where $ a^{\HH} = \sum_{k \geq 0} \log^k a \hspace{0.5mm}\HH^k / k!$
and $\stackrel{fdd}{=} $ denotes equality for all finite-dimensional distributions.
In essence, Eq.~\eqref{equ:ssmulti} states that the statistics of ${ \cal B}_{\HH, \mathbf{A}}(t) $ are covariant under a change of time scale, $t \rightarrow t/a$, up to a change in amplitude ${\cal B} \rightarrow a^{\HH} {\cal B}$.
Importantly, this holds for any dilation factor $a>0$.
However, the factor controlling the change in amplitude $\rightarrow a^{\HH}$ involves jointly all entries of the vector  of selfsimilarity parameters $\underline{H} = (H_1,\ldots,H_M) $, through the mixing matrix ${\mathbf W}$.

Of particular pedagogical interest is the case when the scaling exponent matrix $\HH$ is diagonal, $\HH = \makebox{diag}(H_1,\ldots,H_M)$ (equivalently, ${\mathbf W} = \mathbb{I}$):
$M$-variate selfsimilarity in Eq.~\eqref{equ:ssmulti} then simplifies into a collection of $M$, component-wise, univariate selfsimilarity relations, each controlled by a single selfsimilarity parameter, i.e., $ \forall m= 1,\ldots,M $ and $ a > 0 $:
 \begin{equation}
 \label{equ:ss}
\{({\cal B}_{\HH, \mathbf{A}})_m(t) \}_{ t \in \real}  \stackrel{fdd}{=} \\
  \{a^{H_m}({\cal B}_{\HH, \mathbf{A}})_m(t/a)\}_{t \in \real}.
\end{equation}

\subsection{Eigen-wavelet analysis}

\subsubsection{Multivariate wavelet spectrum}
\label{sec:wavdef}

Let $\psi_0$ denote the so-called \emph{mother wavelet} defined as a compact support, unit norm reference pattern $\int_\real  \psi_0(t)^2 \, \mathrm{d}t = 1$, characterized by its number of vanishing moments $N_\psi$,  and satisfying regularity conditions quantified by the decay of its Fourier transform $\hat \psi_0$:
There is $\alpha >1$ such that $\underset{f \in \real}{\sup} \; \left\vert \hat \psi_0(f) \right\vert \, \left(1 + \vert f \vert \right)^\alpha < + \infty$.

Let $Y$ denote a $M \times N$ multivariate process, with $M$ and $N$ the number of components and the sample size.
The multivariate Discrete Wavelet Transform coefficients at scale $2^j$ consist of a $M \times N$ matrix $D_{Y}(j,k) = \int_\mathbb{R}  \psi(j,k) Y(t)\mathrm{d}t$, with  $\psi(j,k)=2^{-j/2} \psi_0(2^{-j}t-k)$ a collection of dilated and translated templates of $\psi_0$.
For further details on wavelet transforms, the reader is referred, e.g., to \cite{mallat:1999}.

The wavelet spectrum is defined as the collection of $M \times M$ scale-dependent random matrices:
\begin{equation}
\mathbf{S}_Y(2^j) =  \frac{1}{n_j}\sum^{n_j}_{k=1} D_Y(2^j,k) D_Y(2^j,k)^\top.
\end{equation}

\subsubsection{Multivariate wavelet spectrum for ofBm}

For $  {\cal B}_{\HH,\mathbf{A}}$, it has been shown that wavelet coefficients reproduce selfsimilarity as \cite{abry2018waveleta,abry2018waveletb}:
\begin{equation}
\label{eq:wavBH}
\left\{D_{ {\cal B}_{\HH,\mathbf{A}}}(2^j,k) \right\}_{k \in \ZZ}  \stackrel{fdd}{=}  \left\{2^{j(\HH +\frac{1}{2}\mathbb{I})} D_{ {\cal B}_{\HH,\mathbf{A}}}(2^0,k) \right\}_{k \in \ZZ}.
\end{equation}
The wavelet spectrum thus becomes:
\begin{equation}
\label{eq:wavBHA}
\mathbf{S}_{ {\cal B}_{\HH,\mathbf{A}}}(2^j) \stackrel{d}{=} 2^{j(\HH+\frac{1}{2}\mathbb{I})} \, S_{ {\cal B}_{\HH,\mathbf{A}}}(2^0) \, 2^{j(\HH^\top+\frac{1}{2}\mathbb{I})}.
\end{equation}
When ${\mathbf W} \neq \mathbb{I}$, Eq.~\eqref{eq:wavBHA} shows that (the expectation of) each entry of $\mathbf{S}_{ {\cal B}_{\HH,\mathbf A}}(2^j) $ generally consists of a mixture of power laws with respect to the scale $2^j$. In other words,
\begin{equation}
\label{eq:spectrum_mixing_power_law}
\mathbb{E} \mathbf{S}_{m,m'}(2^j) = \sum_{k=1}^M \sum_{k'=1}^M \alpha_{k,k'}^{m,m'} 2^{j(H_k+H_{k'}+1)},
\end{equation}
with $\alpha_{k,k'}^{m,m'}$ a priori depending on both $\HH$ and $\mathbf{A}$.

\subsubsection{Estimation of the vector of selfsimilarity parameters}

It has been shown in \cite{abry2018waveleta,abry2018waveletb} that each eigenvalue $\lambda_m(2^j)$ of $\E \mathbf{S}_{ {\cal B}_{\HH,\mathbf{A}}}(2^j)$ behaves asymptotically as a power law with respect to the scale $2^j$, where the scaling exponent is controlled by $H_m$,
\begin{equation}
\label{eq:eigen_power_law}
\lambda_m(2^j) \sim \xi_m(2^0) \, (2^j)^{2H_m+1}, \quad  \textrm{as } j \rightarrow +\infty.
\end{equation}
Eq.~\eqref{eq:eigen_power_law} naturally suggests that the estimation of the vector of selfsimilarity parameters $\underline{H}$ can be performed by a collection of individual,
for each $m = 1,\ldots, M$,  linear regressions on the logarithms of the eigenvalues $\hat \lambda_m(2^j)$ of $\mathbf{S}_{ {\cal B}_{\HH,\mathbf A}}(2^j) $ versus the logartihms of the scales $2^j$:
\begin{equation}
\label{eq:Mest}
\hat H_m^{\rm M} = \frac{1}{2} \left(\sum_{j=j_1}^{j_2} w_j \log_2 \hat \lambda_{m}(2^{j}) -1\right),
\end{equation}
where $ j \in (j_1, j_2)$ denote the range of octaves involved and the regression weights satisfy: $\sum_{j=j_1}^{j_2} j w_j =1 $ and $\sum_{j=j_1}^{j_2} w_j =0 $.
The asymptotic estimation performance was thoroughly studied theoretically in \cite{abry2018waveleta,abry2018waveletb}.

\section{Multivariate Fractional Brownian Motion}
\label{sec.MFBM}

\subsection{Definition}

For the specific \emph{nonmixing} case (${\mathbf W}=\mathbb{I}$), the selfsimilarity relation in Eq.~\eqref{equ:ss} suggests, as a first contribution of this work, the construction of another type of multivariate selfsimilar process. It is hereinafter referred to as multivariate fractional Brownian motion (mfBm) ${ {B}_{{\mathbf \Sigma},\mathbf{W},\Hv}}(t) $ and defined as a mixture of correlated univariate fractional Brownian motions:
\begin{equation}
\label{equ:mfBm}
 {B}_{{\mathbf \Sigma},{\mathbf W},\Hv}(t) := {\mathbf W} B_{{\mathbf \Sigma},\underline{H}}(t)^\top,
\end{equation}
where ${\mathbf W}$ denotes a $M \times M$ real-valued invertible matrix. Also, $B_{{\mathbf \Sigma},\underline{H}}(t) = [B_{{\mathbf \Sigma},H_1}(t),\ldots, B_{{\mathbf \Sigma},H_M}(t)]$ consists of a collection of $M$ fBms, $B_{{\mathbf \Sigma},H_m} $, $m=1,\ldots,M$, each with a possibly different selfsimilarity parameter $H_m$ ($0< H_m< 1$) and variance $\sigma^2_m$. In addition, these fBms are correlated via an $M \times M$ point matrix ${\mathbf \Sigma} = \diag(\sigma^2_1,\ldots,\sigma^2_M) \boldsymbol{\rho} \diag(\sigma^2_1,\ldots,\sigma^2_M)^\ast$, with $ \boldsymbol{\rho}$ the matrix of pairwise correlations.

From the definition of univariate fBm \cite{Samorodnitsky1994}, we can show that a mfBm $ {B}_{{\mathbf \Sigma},{\mathbf W},\Hv}$ is a zero-mean multivariate Gaussian process with covariance function given by:
\begin{equation}
\label{eq:cov_mfBm}
\begin{aligned}
\E \left[ B_{{\mathbf \Sigma},{\mathbf W},\Hv}(t) \right. & \left. B_{{\mathbf \Sigma},{\mathbf W},\Hv}(s)^\top \right] \\
& \begin{aligned}
=  \frac{1}{2} {\mathbf W} & \left(  \vert t \vert^{\mathrm{diag}(\Hv)}  \; {\mathbf \Sigma} \; \vert t \vert^{\mathrm{diag}(\Hv)} \right. \\
& \left. + \vert s \vert^{\mathrm{diag}(\Hv)}  \; {\mathbf \Sigma} \; \vert s \vert^{\mathrm{diag}(\Hv)} \right. \\
& \left.   - \vert t -s \vert^{\mathrm{diag}(\Hv)}  \; {\mathbf \Sigma} \; \vert t-s \vert^{\mathrm{diag}(\Hv)} \right) {\mathbf W}^\top.
\end{aligned}
\end{aligned}
\end{equation}
In particular, the covariance matrix of mfBm simply reads $\E \left[ B_{{\mathbf \Sigma},{\mathbf W},\Hv}(1) \right. \left. B_{{\mathbf \Sigma},{\mathbf W},\Hv}(1)^\top  \right]  = {\mathbf W} {\mathbf \Sigma} {\mathbf W}^\top$.

\subsection{mfBm vs. ofBm}

Comparing Eqs.\ \eqref{equ:ofbmcov} and \eqref{eq:cov_mfBm}, calculations not reported here yield the relations:
\begin{eqnarray}
\label{equ:ofbmnp}
{\mathbf A} {\mathbf A}^*  = {\mathbf W} \cdot ({\mathbf G} \odot {\mathbf \Sigma}) \cdot {\mathbf W}^\top \makebox{ and } \HH = {\mathbf W} \, \makebox{diag} (\underline{H}) \, {\mathbf W}^{-1}.
\end{eqnarray}
In \eqref{equ:ofbmnp}, ${\mathbf G}$ is a $M \times M$ matrix with entries $g_{m,m'} = \Gamma(H_{m}+H_{m'}+1) \sin((H_{m}+H_{m'})\pi/2) / (2 \pi) $, where $\Gamma$ is the standard Gamma-Euler function and the symbol $\odot $ stands for the Hadamard matrix product. Eq.~\eqref{equ:ofbmnp} shows that the mfBm ${B}_{{\mathbf \Sigma},{\mathbf W},\Hv}$ is a special case of the ofBm ${\cal B}_{\HH, {\mathbf A}}$, satisfying Assumptions OFBM1 to OFBM3.
The mfBm ${B}_{{\mathbf \Sigma},{\mathbf W},\Hv}$ thus provides a model for multivariate selfsimilarity. It can be considered a practical counterpart of the formal ofBm model, yet better suited for \emph{signal processing purposes} and applications to real-world data.
It splits the abstract matrices ${\mathbf A}$ and $\HH$ from the ofBm model into interpretable ingredients that are closer to applications: a vector of scaling exponents $\underline{H}$, a (premixing or intrinsic) covariance matrix ${\mathbf \Sigma}$, and a mixing matrix ${\mathbf W}$.
In particular, Eq.~\eqref{equ:ofbmnp} shows that mfBm ${B}_{{\mathbf \Sigma},{\mathbf W},\Hv}$ necessarily fulfills  Assumption OFBM3 and thus implies by construction time-reversible statistics.
An alternative permitting non time-reversible statistics had been considered in \cite{Amblard_P-0_2011_j-ieee-tsp_imfbm}, yet not allowing for mixing hence less versatile for applications.

Furthermore, the mfBm model allows us to better grasp a key feature of multivariate selfsimilarity. Starting with the simple bivariate setting, $M=2$, combining Assumption OFBM2 and Eq.~\eqref{equ:ofbmnp} yields:
\begin{equation}
\label{equ:vivcorr}
\boldsymbol{\rho}_{12}^2 \leq \rho_{max}^2 :=  \frac{\Gamma(2H_1+1)\Gamma(2H_2+1)\sin(\pi H_1)\sin(\pi H_2)}{(\Gamma(H_1+ H_2 +1)\sin(\pi/2 (H_1+H_2)))^2}.
\end{equation}
This shows that the correlation between components and differences in selfsimilarity parameters cannot be chosen independently.
This is illustrated in Fig.~\ref{fig:CorrConditionM2}: The largest allowed (squared) cross-correlation amongst components decreases when the difference $|H_2-H_1|$ increases, and conversely.
These calculations extend to $M$-variate settings with larger (absolute values of) correlations amongst components restricting the largest achievable discrepancies $|H_m - H_{m'}|$ amongst entries of the vector of selfsimilarity parameters (see also \cite{Amblard_P-0_2011_j-ieee-tsp_imfbm}).

\begin{figure}[!t]
\centerline{
\includegraphics[width=.6\linewidth]{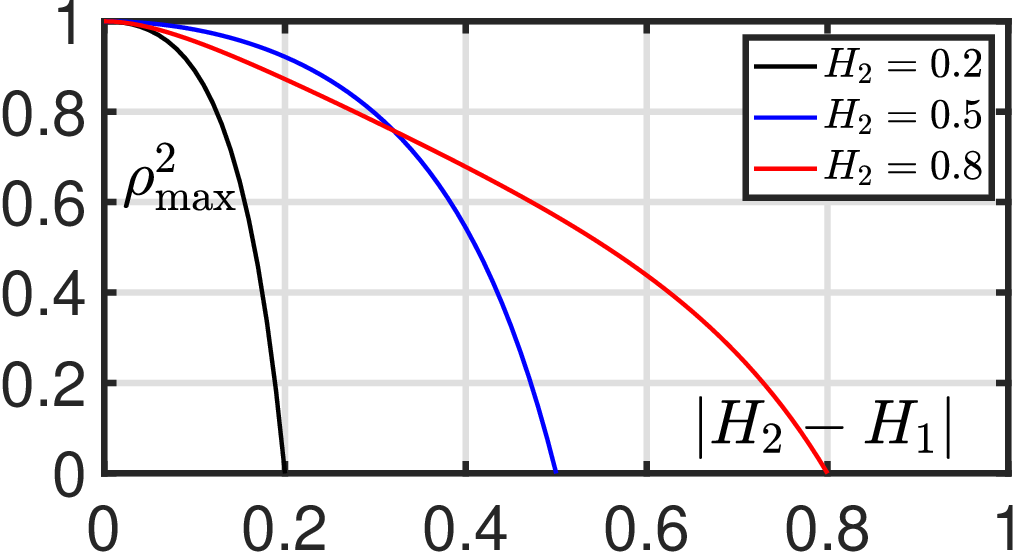}
}
\caption{\label{fig:CorrConditionM2}
{\bf Interplay between the largest possible difference in selfsimilarity parameters and the largest (squared) correlation coefficients,}  for different $H_2$ (bivariate setting).}
\end{figure}

Finally, $B_{{\mathbf \Sigma},\underline{H}}  \equiv {B}_{{\mathbf \Sigma},{\mathbf W}=\mathbb{I},\Hv}$ is obviously a special case of mfBm (cf., e.g., \cite{Coeurjolly_J-F_2013_ESAIM_wamfbm,Amblard_P-0_2011_j-ieee-tsp_imfbm,wendt2017multivariate}). Yet, it displays no mixing across components, rendering the model less versatile for practical purposes.

\subsection{Wavelet analysis of mfBm}

For mfBm, Eqs.~\eqref{eq:wavBH} and \eqref{eq:wavBHA} take the specific forms
\begin{equation}
\{D_{{B}_{{\mathbf \Sigma},{\mathbf W},\Hv}}(2^j,k)\}_{k \in \ZZ}  \stackrel{fdd}{=}  \left\{ {\mathbf W}  \, 2^{j(\underline{H} +\frac{1}{2})} \, D_{{B}_{{\mathbf \Sigma},\Hv}}(2^0,k) \right\}_{k \in \ZZ},
\end{equation}
\begin{equation}\label{eq:mfBm_spectrum}
\begin{aligned}
{\mathbf S}_{{B}_{{\mathbf \Sigma},{\mathbf W},\Hv}}(2^j) \stackrel{d}{=} {\mathbf W} \,  2^{j(\underline{H}+\frac{1}{2})}  \, {\mathbf S}_{{B}_{{\mathbf \Sigma},\Hv}}(2^0) \,  2^{j(\underline{H}+\frac{1}{2})} \, {\mathbf W}^\top ,
\end{aligned}
\end{equation}
with $2^{j(\underline{H}+\frac{1}{2})}= \makebox{diag}(2^{j(H_1+1/2)},\ldots,2^{j(H_M+1/2)})$.
This disentangles the contributions of the dilation vs.\ mixing operators to the shaping of the wavelet coefficients and spectrum.

Furthermore, Eqs.~\eqref{eq:spectrum_mixing_power_law} and \eqref{eq:eigen_power_law} obviously hold for mfBm. This permits to show, respectively, that both $\alpha_{k,k'}^{m,m'}$ and  $ \xi_m(2^0) $ depend on $\mathbf{W}$ and ${\mathbf \Sigma}$, but not on $\underline{H}$. An extension of the calculations in \cite{abry2018waveleta,abry2018waveletb} shows that $ \xi_m(2^0) $ actually depends only on the entries $\left({\mathbf S}_{{B}_{{\mathbf \Sigma},{\mathbf W},\Hv}}(2^0) \right)_{(m^{'},m^{''})}$  for $m\leq m', m^{''} \leq M$ (see Remark~\ref{remA}, Section~\ref{sec:appendixA}, supplementary material).

Finally and importantly, Eq.~\eqref{eq:eigen_power_law} demonstrates that the wavelet eigenvalue representations for both mfBm and ofBm disentangle the mixed power laws appearing in the wavelet spectrum ${\mathbf S}_{{B}_{{\mathbf \Sigma},{\mathbf W},\Hv}}(2^j)$. Thus, they restore the close relationship between power laws and scale-free dynamics, which had gotten altered by the mixing effect of the matrix ${\mathbf W}$. This motivates the estimation procedure $\underline{\hat H}^{\rm M}$ in Eq.~\eqref{eq:Mest}, which thus applies to mfBm as well as to ofBm.

\section{Selfsimilarity parameter vector estimation}
\label{sec.WOFBM}

\subsection{Repulsion bias-corrected multivariate estimation}
\subsubsection{Repulsion bias}

$\underline{\hat H}^{\rm M}$ in Eq.~\eqref{eq:Mest} is based on eigenvalues computed from the wavelet spectra ${\mathbf S}_{{\cal B}_{\HH, {\mathbf A}}}(2^j)$, namely, on the empirical wavelet coefficient covariance matrices estimated at different scales. However, it is well known that, for finite-size sample estimation of covariance matrices, eigenvalues undergo the so-called \emph{repulsion effect}:
Estimated eigenvalues tend to be farther apart one from the others, compared to true eigenvalues, all the more as the sample size is decreasing compared to the number of components  (cf. e.g., \cite{tao2012topics}).

In the wavelet spectrum representation, this \emph{repulsion effect} is enhanced by the fact that empirical wavelet coefficient covariance matrices are estimated with sample size $n_j$ that depends on analysis scales $ 2^j$, essentially as $n_j = 2^{-j}N$.
This induces scale-dependent repulsion biases in $\hat \lambda_m(2^j)$, and hence an overall bias in $ \underline{\hat H}^{\rm M}$  \cite{lucas2021bootstrap}.
The repulsion bias is illustrated in Fig.~\ref{fig:strucFunc_repulsEffect} (left), which displays $\log_2 \hat \lambda_m(2^j)$  vs $ \log_2 2^j = j$.
It is based on a synthetic mfBm with $M=6$ components of size $N=2^{12}$, with $H_1 =  \ldots =H_6 = 0.6$, a randomly selected orthonormal mixing matrix ${\mathbf W} \neq \mathbb{I}$ and a correlation matrix ${\mathbf \Sigma}$ with off-diagonal coefficients set to $0.8$. The five smallest curves should thus theoretically superimpose, whereas discrepancies of amplitudes in estimated eigenvalues, increasing with scales, can be observed.

\begin{figure}[!t]
\centerline{
\includegraphics[width=\linewidth]{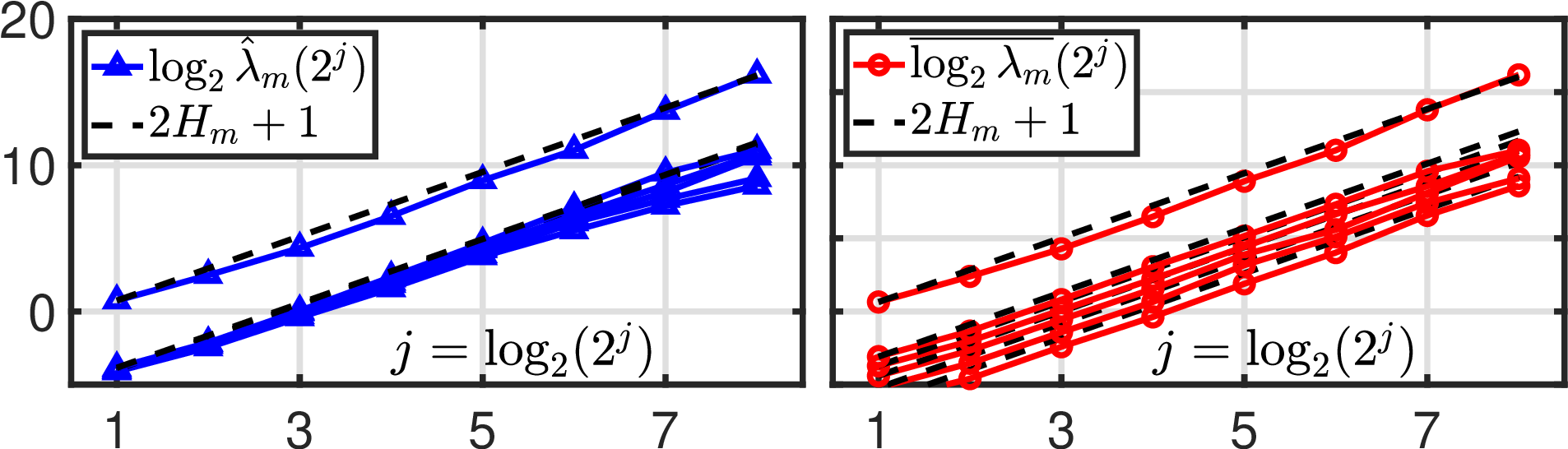}
}
\vspace{-.2cm}
\centerline{
\includegraphics[width=\linewidth]{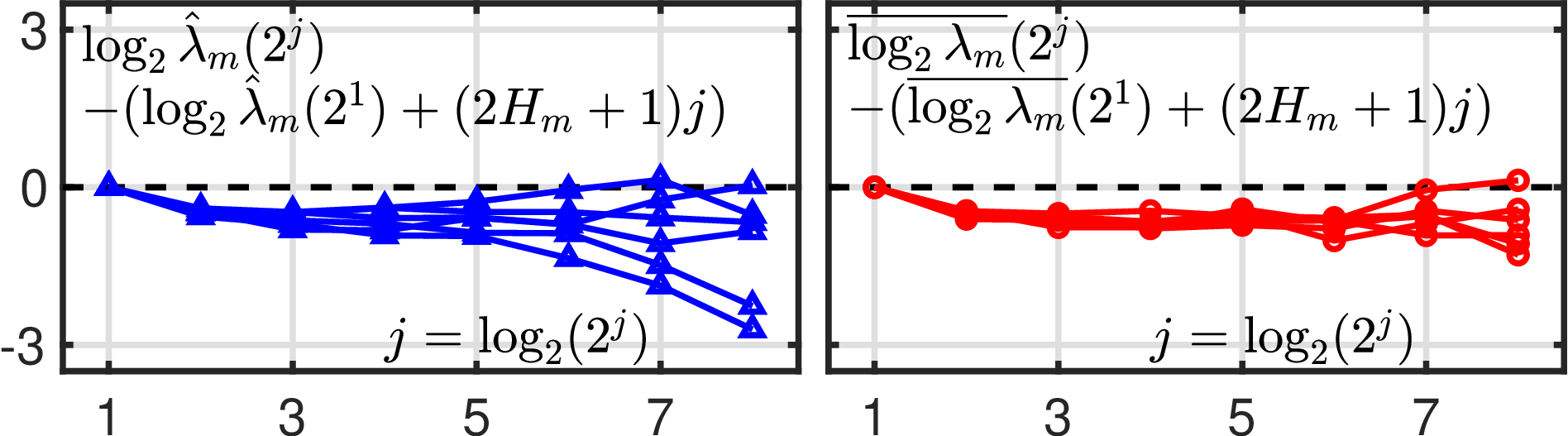}
}
\caption{\label{fig:strucFunc_repulsEffect}
{\bf Repulsion bias in eigenvalue estimation.} Top: Logarithms of estimated eigenvalues $\log_2 \hat \lambda_m(2^j)$ (left, blue) and $\overline{\log_2 \lambda_m}(2^j)$ (right, red), as functions of the logarithms of the scales $ \log_2 2^j = j$ for a synthetic mfBm with $M=6$, $N=2^{12}$, ${\mathbf \Sigma} \neq \mathbb{I}$, ${\mathbf W} \neq \mathbb{I}$ and $H_1 = \ldots = H_6 = 0.6$, and $j_2=8$. Superimposed straight lines indicate the theoretical slopes $ 2 H_m + 1$ (top, black). Bottom: Logarithms of estimated eigenvalues $\log_2 \hat \lambda_m(2^j)$ (left, blue) and $\overline{\log_2 \lambda_m}(2^j)$ (right, red) with theoretical linear behaviors subtracted.
}
\end{figure}

\subsubsection{Repulsion bias correction}

Elaborating on \cite{lucas2021bootstrap}, a novel estimation procedure for the estimation of $\underline{H}$ is devised and studied, hereinafter referred to as the \emph{bias-corrected} estimator $  \underline{\hat H}^{\rm M, bc} = (\hat H_1^{\rm M,bc}, \ldots, \hat H_M^{\rm M,bc})$. The main idea behind its construction lies in using the \textit{same number} of wavelet coefficients in the wavelet spectrum estimation at each scale. As a result, the eigenvalues computed from the wavelet spectra at different scales $2^j$ undergo an \emph{equivalent strength} repulsion effect, i.e., one that does not vary with the scale $2^j$. To that end, let $2^{j_2}$ denote the largest analysis scale used in estimation (cf. Eq.~\eqref{eq:Mest}) and let $n_{j_2}$ be the number of corresponding wavelet coefficients at that scale.
At each scale $2^{j_1} \leq 2^j \leq 2^{j_2}$, a collection of wavelet spectra are computed from $\tau=1,\ldots,2^{j_2-j}$
consecutive non-overlapping windows of wavelet coefficients with same size $n_{j_2}$:
\begin{equation}
{\mathbf S}^{(\tau)}(2^j) := \frac{1}{n_{j_2}}\sum^{\tau n_{j_2}}_{k=1+(\tau-1)n_{j_2}}D_{{\cal B}_{\HH,\mathbf A}}(2^j,k)D_{{\cal B}_{\HH,\mathbf A}}(2^j,k)^T.
\end{equation}
At each scale, the logarithms of the eigenvalues $\hat \lambda_m^{(\tau)}(2^j)$ computed from each wavelet spectrum ${\mathbf S}^{(\tau)}(2^j) $ are then averaged across windows:
\begin{equation}
\overline{\log_2\lambda_m}(2^j) := 2^{j-j_2}\sum_{\tau=1}^{2^{j_2-j}} \log_2 \hat \lambda_m^{(\tau)}(2^j ).
\end{equation}
The impact of this bias correction procedure on the scaling of the eigenvalues is illustrated in Fig.~\ref{fig:strucFunc_repulsEffect} (right),
which clearly shows that the theoretical slopes $2H_m+1$ are much better reproduced by the $\overline{\log_2\lambda_m}(2^j)$ than by the classical $\log_2 \hat \lambda_m(2^j)$. The bias-corrected multivariate estimates $ \hat H_m^{\rm M, bc}$ are thus defined as linear regressions across (the logarithms of the) scales $2^j$ of the averaged log-eigenvalues $\overline{\log_2\lambda_m}(2^j)$:
\begin{equation}
 \label{eq:bcHurstEstimation}
 \hat H_m^{\rm M, bc} = \frac{1}{2}\left(\sum_{j=j_1}^{j_2} w_j \overline{\log_2 \lambda_m}(2^{j}) - 1 \right).
 \end{equation}
 The asymptotic estimation performance is studied theoretically in Section~\ref{sec:theoreticalPerf}. The finite-size estimation performance is investigated numerically in Section~\ref{sec.MC}.

\subsubsection{Nonmixing case}

In the nonmixing case, i.e., when ${\mathbf W}=\mathbb{I}$, the selfsimilarity matrix reduces to $\HH=\mathrm{diag}(\Hv)$.
While the multivariate estimators $ \hat H_m^{\rm M} $  in Eq.~\eqref{eq:Mest} and  $\hat H_m^{\rm M, bc}$ in Eq.~\eqref{eq:bcHurstEstimation} obviously apply to the nonmixing case, a natural estimation procedure, extending univariate analysis, consists in performing linear regressions on the logarithms of the diagonal coefficients $ {\mathbf S}_{m,m}(2^{j})$ only:
\begin{equation}
\label{eq:UEst}
\hat H_m^{\rm U} = \frac{1}{2} \left(\sum_{j=j_1}^{j_2} w_j \log_2 {\mathbf S}_{m,m}(2^{j}) -1\right).
\end{equation}
The resulting estimators $\hat H_m^{\rm U}$, intrinsically \emph{univariate} in spirit, were intensively studied in \cite{wendt2017multivariate} for nonmixing mfBm.

\subsection{Theoretical study of asymptotic estimation performance}
\label{sec:theoreticalPerf}

\subsubsection{Asymptotic framework}
\label{sec:asymtpf}
The estimation performance is studied theoretically in the asymptotic double and joint limits of large sample sizes $N \rightarrow +\infty$, with linear regressions performed at coarse scales, in the range $(j_1(N),j_2(N)) \rightarrow +\infty$.
Technically, the scaling range is defined as $(j_1(N) = j_1^0 + \log_2 a(N),j_2(N) = j_2^0 + \log_2 a(N))$. The pair $(j_1^0,j_2^0)$ is an arbitrarily chosen range for $N_0$, and $a(N)$ must satisfy:
\begin{equation}\label{e:a(N)}
a(N)2^{j_2} \leq N,\quad
\frac{a(N)}{N}+ \frac{N}{a(N)^{2 \varpi +1}} \longrightarrow 0, \textrm{ as } N \rightarrow \infty,
\end{equation}
\begin{equation}
\label{e:varpi_parameter}
\varpi = \min \left\{\min_{ \{ 1 \leq i \leq M \vert \; H_{i}-H_{i-1}>0 \} }(H_{i}-H_{i-1}), \frac{H_{1}}{2} + \frac{1}{4} \right\}
\end{equation}
(where, for notational simplicity, $H_0 := 0$).
These conditions essentially imply that $\log_2 a(N)$ increases as $\beta \log_2 N$, with $ 1/(2\varpi +1) < \beta < 1$. They also imply that the number of scales involved in linear regressions does not vary with sample size $N$:
$j_2(N)-j_1(N)+1 = j_2^0-j_1^0+1$.

\subsubsection{Asymptotic consistency}

Theorem \ref{t:H-hat_m_is_consistent} establishes the asymptotic consistency of the bias-corrected estimator $\underline{\hat H}^{\rm M,bc} $:
\begin{theorem}\label{t:H-hat_m_is_consistent}
Under Assumptions (OFBM1$-$3) and for a mother wavelet as defined in Section~\ref{sec:wavdef}, (with $\stackrel{\bbP}\rightarrow$ the convergence in probability):
\begin{enumerate}
\item For any $j = j_1^0,\hdots,j_2^0$, $\tau  = 1,\hdots,2^{j_2^0-j}$ and $m = 1,\hdots, M$, as $N \rightarrow \infty$,
\begin{equation}\label{e:lim_n_a*lambda/a^(2h+1)}
\frac{\hat \lambda_{m}^{(\tau)}(a(N)2^j)}{a(N)^{2H_m+1}} \stackrel{\bbP}\rightarrow \xi_{m}(2^j) = (2^j)^{2H_m+1}\xi_{m}(2^0)  > 0,
\end{equation}
where $\hat \lambda^{(\tau)}_{1}(a(N)2^j), \ldots,\hat \lambda^{(\tau)}_{M}(a(N)2^j)$ are the eigenvalues of ${\mathbf S}^{(\tau)}(a(N)2^j)$.
Likewise, as $N \rightarrow \infty$,
\begin{equation}\label{e:lim_n_a*lambda(EW)/a^(2h+1)}
\frac{\lambda^{(\tau)}_{m}(a(N)2^j)}{a(N)^{2H_{m}+1}} \rightarrow \xi_{m}(2^j)=(2^j)^{2H_m+1}\xi_{m}(2^0)  > 0,
\end{equation}
where $\lambda^{(\tau)}_{1}(a(N)2^j), \ldots, \lambda^{(\tau)}_{M}(a(N)2^j)$ are the eigenvalues of $\bbE {\mathbf S}^{(\tau)}(a(N)2^j)$.
\item Let $a(N)$ be as in \eqref{e:a(N)}. Then, as $N \rightarrow \infty$,
\begin{equation}\label{e:H-hat_m_is_consistency}
\hat H^{\rm M,bc}_{m} \stackrel{\bbP}\rightarrow H_m, \quad \forall m = 1,\hdots, M,
\end{equation}
\end{enumerate}
\end{theorem}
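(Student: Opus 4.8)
The plan is to reduce part (ii) to part (i), and to split part (i) into a deterministic (expected-spectrum) statement and a stochastic (empirical) one. I would first dispatch the expected-spectrum claim \eqref{e:lim_n_a*lambda(EW)/a^(2h+1)}. Since $\bbE {\mathbf S}^{(\tau)}(a(N)2^j)=\bbE{\mathbf S}_{{\cal B}_{\HH,\mathbf A}}(a(N)2^j)$ is independent of $\tau$, and $a(N)2^j\to\infty$ for each fixed $j\in\{j_1^0,\dots,j_2^0\}$, the eigenvalue power law \eqref{eq:eigen_power_law} applies with argument $a(N)2^j$, giving $\lambda_m^{(\tau)}(a(N)2^j)\sim\xi_m(2^0)(a(N)2^j)^{2H_m+1}$; dividing by $a(N)^{2H_m+1}$ yields the deterministic limit $\xi_m(2^j)$, which is strictly positive because $\bbE{\mathbf S}(2^0)\succ0$ under OFBM2.

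For the empirical claim \eqref{e:lim_n_a*lambda/a^(2h+1)}, I would exploit the self-similarity \eqref{eq:wavBH} to write, in distribution, ${\mathbf S}^{(\tau)}(a(N)2^j)\stackrel{d}{=}a(N)^{\HH+\frac12\mathbb{I}}\,\widetilde{\mathbf S}^{(\tau)}(2^j)\,a(N)^{\HH^\top+\frac12\mathbb{I}}$, where $\widetilde{\mathbf S}^{(\tau)}(2^j)$ is a window spectrum at base scale $2^j$ built from $n_{j_2}$ coefficients. Because $n_{j_2}=N/(2^{j_2^0}a(N))\to\infty$ (ensured by $a(N)/N\to0$ in \eqref{e:a(N)}), a law of large numbers for the stationary, weakly dependent wavelet coefficients gives $\widetilde{\mathbf S}^{(\tau)}(2^j)\stackrel{\bbP}{\rightarrow}\bbE{\mathbf S}(2^j)\succ0$. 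Writing $a(N)^{\HH+\frac12\mathbb{I}}=a(N)^{1/2}{\mathbf W}\diag(a(N)^{H_m}){\mathbf W}^{-1}$ under OFBM1, the eigenvalues of this congruence are the squared singular values of $a(N)^{\HH+\frac12\mathbb{I}}\widetilde L$ (with $\widetilde{\mathbf S}^{(\tau)}(2^j)=\widetilde L\widetilde L^\top$), and the diagonal factor $\diag(a(N)^{H_m})$ spreads these singular values so that the $m$-th one tracks the power law $a(N)^{H_m+1/2}$, reproducing \eqref{e:lim_n_a*lambda/a^(2h+1)}.

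The hard part will be controlling the subdominant eigenvalues, i.e.\ those with $m<M$. The naive Weyl bound $|\hat\lambda_m^{(\tau)}-\lambda_m^{(\tau)}|\le\|{\mathbf S}^{(\tau)}-\bbE{\mathbf S}\|$ is governed by the top direction, of order $a(N)^{2H_M+1}/\sqrt{n_{j_2}}$, which after division by $a(N)^{2H_m+1}$ does not vanish when $H_m<H_M$. To bypass this, I would deflate: split $\RR^M$ into the eigen-subspaces carrying the distinct scales $a(N)^{2H_i+1}$, bound the off-block coupling, and show each block's eigenvalues concentrate around their means. The separation of two consecutive scales is of relative order $a(N)^{H_i-H_{i-1}}$, and it must dominate the relative fluctuations ($\sim\sqrt{N/a(N)}$-type); this is precisely what the parameter $\varpi$ in \eqref{e:varpi_parameter} and the rate $N/a(N)^{2\varpi+1}\to0$ in \eqref{e:a(N)} encode. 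Tied exponents $H_i=H_{i-1}$ leave a genuinely multidimensional block, whose limiting eigenvalues define the corresponding $\xi_m(2^0)>0$. This step extends the eigenstructure analysis of \cite{abry2018waveleta,abry2018waveletb} from the single-window to the windowed spectra.

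Finally, part (ii) follows from part (i) by continuous mapping and the regression algebra. Taking $\log_2$ in \eqref{e:lim_n_a*lambda/a^(2h+1)} gives $\log_2\hat\lambda_m^{(\tau)}(a(N)2^j)-(2H_m+1)\log_2 a(N)\stackrel{\bbP}{\rightarrow}\log_2\xi_m(2^j)$ for each fixed $m,\tau,j$. Reindexing scales by $j=j^0+\log_2 a(N)$, the number of windows $2^{j_2-j}=2^{j_2^0-j^0}$ and the number of regression scales $j_2-j_1+1=j_2^0-j_1^0+1$ are both independent of $N$, so the window average $\overline{\log_2\lambda_m}(2^j)$ is a fixed-size average of terms each converging in probability; hence $\overline{\log_2\lambda_m}(2^j)-[(2H_m+1)j+\log_2\xi_m(2^0)]\stackrel{\bbP}{\rightarrow}0$. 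Taking the finite linear combination \eqref{eq:bcHurstEstimation} and using $\sum_j w_j j=1$ and $\sum_j w_j=0$, the intercept drops out and $\sum_j w_j\overline{\log_2\lambda_m}(2^j)\stackrel{\bbP}{\rightarrow}2H_m+1$, whence $\hat H_m^{\rm M,bc}\stackrel{\bbP}{\rightarrow}\tfrac12(2H_m+1-1)=H_m$, establishing \eqref{e:H-hat_m_is_consistency}.
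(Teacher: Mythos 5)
Your proposal is correct and follows essentially the same route as the paper: part (ii) is obtained from part (i) by exactly the same window-averaging and regression-weight algebra (using $\sum_j w_j=0$, $\sum_j jw_j=1$), and your deflation-by-Hurst-exponent argument for the subdominant empirical eigenvalues is precisely the block decomposition (index sets ${\mathcal I}_-,{\mathcal I}_0,{\mathcal I}_+$ and the Schur-complement matrix ${\mathbf M}$) that the paper invokes by adapting Theorem 3.1 of \cite{abry:boniece:didier:wendt:2022:prob} and spells out in Remark~\ref{remA}. The only difference is presentational: the paper delegates that hard step to the cited reference (and derives the scaling factor $(2^j)^{2H_m+1}$ by a reparametrization $\widetilde a(N)=a(N)2^j$ rather than from \eqref{eq:eigen_power_law} directly), whereas you sketch the mechanism explicitly.
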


\begin{proof}
For ({\it i}), the proof follows from adapting the technique for establishing Theorem 3.1 in \cite{abry:boniece:didier:wendt:2022:prob}, see Section~\ref{sec:appendixA} of the supplementary material for details.

For ({\it ii}), Eq.~\eqref{e:lim_n_a*lambda/a^(2h+1)} leads to the following equivalence:
\begin{align}
\hat H^{\rm M,bc}_{m}  
& =\frac{1}{2} \sum^{j_2^0}_{j=j_1^0} \frac{w_j }{2^{j_2^0-j}} \sum^{2^{j_2^0-j}}_{\tau=1}  \Big( \log_2 \frac{\hat \lambda^{(\tau)}_{m}(a(N)2^j)}{a(N)^{2H_m+1}}  \\ \nonumber
& \qquad \qquad - \log_2 a(N)^{2H_m+1} \Big) - \frac{1}{2} \\ \nonumber
& \stackrel{\bbP}\sim \frac{1}{2} \sum^{j_2^0}_{j=j_1^0} \frac{w_j }{2^{j_2^0-j}} \sum^{2^{j_2^0-j}}_{\tau=1}  \Big( \log_2 (2^j)^{2H_m+1}\xi_{m}(2^0)  \\ \nonumber
& \qquad \qquad - \log_2 a(N)^{2H_m+1} \Big) - \frac{1}{2} \\ \nonumber
& \stackrel{\bbP}\sim \frac{1}{2} \sum^{j_2^0}_{j=j_1^0} w_j \Big( (2H_m+1)j+ \log_2 \frac{\xi_{m}(2^0)}{a(N)^{2H_m+1}}  \Big) - \frac{1}{2}.
\end{align}
Hence, by the definition of $w_j$, as $N \rightarrow \infty$,
\begin{equation}
\hat H^{\rm M,bc}_{m} \stackrel{\bbP}\sim \frac{1}{2} [(2H_m + 1)] - \frac{1}{2} = H_m.
\end{equation}
This establishes \eqref{e:H-hat_m_is_consistency}.
\end{proof}

\subsubsection{Simple eigenvalues condition}
\label{sec:C0}
Our theoretical study of the asymptotic distribution relies on the condition that, asymptotically, all wavelet spectra eigenvalues are simple. \\
\noindent {\sc Condition (C0)}:
 $\forall m_1, m_2 \in \{1,\ldots, M\}$, with $m_1 \neq m_2$,
\begin{equation}\label{e:H_ell1_neq_H_ell2_=>_xi_ell1_neq_xi_ell2}
H_{m_1} = H_{m_2}  \Rightarrow \xi_{m_1}(2^0) \neq \xi_{m_2}(2^0),
\end{equation}
which states that either the selfsimilarity parameters $H_m$ are different, or when they are equal, $H_{m_1} = H_{m_2}$, $m_1 \neq m_2$, the constants $\xi_{m}(2^0)$ in Eq.~\eqref{eq:eigen_power_law} are different.

\subsubsection{Asymptotic normality}

Theorem \ref{t:scale_invariance_of_the_distribution} establishes the asymptotic normality  of the bias-corrected estimator $\underline{\hat H}^{\rm M,bc} $:
\begin{theorem}\label{t:scale_invariance_of_the_distribution}
Let $n_{a,j} = \frac{N}{a(N)2^j}$, $ j = j^{0}_1,\hdots,j^{0}_2,$
be the number of wavelet coefficients available at scale $a(N)2^j$.
Under Assumptions (OFBM1$-$3), for a mother wavelet as defined in Section~\ref{sec:wavdef}, and assuming that (C0) holds,
with $a(N)$ defined in \eqref{e:a(N)} and where $\stackrel{d}\rightarrow$ denotes the convergence in distribution:
\begin{enumerate}
\item Then, as $N \rightarrow \infty$,
\begin{align}\label{e:scale_invariance_of_the_distribution}
\left\{\sqrt{n_{a,j_2^0}}\big( \log_2 \right. & \left. \hat \lambda_{m}^{(\tau)}(a(N)2^j) \right. & \\ \nonumber
&\left. - \log_2 \right.  \left. \lambda_{m}^{(\tau)}(a(N)2^j) \big) \right\}^{j=j_1^0,\hdots,j_2^0}_{\substack{\vspace{3mm} \\ m=1,\hdots,M \hfill \\ \tau= 1,\hdots,2^{(j_2^0-j_1^0)-j}} }  & \\ \nonumber
   & \stackrel{d}\rightarrow {\mathcal N}(0,{\mathbf \Upsilon}_{\bf B}),
 \end{align}
where ${\mathbf \Upsilon}_{\bf B}$ is a symmetric positive semidefinite matrix.
\item Moreover, as $N \rightarrow \infty$,
\begin{equation}\label{e:H-hat_m_is_asymptotically_normal}
\Big\{\sqrt{N/a(N)}\hspace{0.5mm}\big(\hat H^{\rm M,bc}_{m} - H_m\big)\Big\}_{m = 1,\hdots, M} \stackrel{d}\rightarrow {\mathcal N}(0,{\mathbf \Sigma}_{\bf B}),
\end{equation}
for some ${\mathbf \Sigma}_{\bf B} \in {\mathcal S}_{\geq 0}(M,\bbR)$.
\end{enumerate}
\end{theorem}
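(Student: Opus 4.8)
\emph{Plan.} The two statements are tied together: part (ii) will follow from part (i) by the delta method applied to the linear regression functional, so the core of the work is the joint central limit theorem (i) for the log-eigenvalues. For (i) the plan is to proceed in three steps: first establish a matrix-valued CLT for the demixed and rescaled wavelet spectra, then transfer it to the eigenvalues via analytic perturbation theory, and finally to the logarithms through a smooth map. Throughout, the key structural fact is that the bias-correction windowing forces every window at every scale to contain the \emph{same} number $n_{a,j_2^0}$ of wavelet coefficients, so that a single CLT rate $\sqrt{n_{a,j_2^0}}$ applies uniformly in $j$ and $\tau$.

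First I would establish the matrix CLT. Each window matrix $\mathbf{S}^{(\tau)}(a(N)2^j)$ is a quadratic form in the jointly Gaussian wavelet coefficients of ${\cal B}_{\HH,\mathbf{A}}$. Using \eqref{eq:wavBH} to write $D(a(N)2^j,k)\stackrel{d}{=}a(N)^{\HH+\frac12\mathbb{I}}D(2^j,k)$, the two-sided demixed matrix $a(N)^{-(\HH+\frac12\mathbb{I})}\mathbf{S}^{(\tau)}(a(N)2^j)a(N)^{-(\HH^\top+\frac12\mathbb{I})}$ reduces to a sample covariance of the fixed-scale coefficients, for which a multivariate CLT on the vectorized entries holds at rate $\sqrt{n_{a,j_2^0}}$. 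The decay of the wavelet-coefficient correlations, guaranteed by the vanishing moments $N_\psi$ and the spectral decay of $\hat\psi_0$, makes coefficients in distinct non-overlapping windows asymptotically uncorrelated, so the limiting covariance is block-diagonal across $\tau$ while retaining cross-scale and cross-component correlations. This step adapts the machinery of Theorem~3.1 in \cite{abry:boniece:didier:wendt:2022:prob}.

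Next I would transfer the CLT to the eigenvalues. After rescaling, the expected spectrum converges to a fixed matrix whose eigenvalues are the limits $\xi_m(2^j)$ of \eqref{e:lim_n_a*lambda(EW)/a^(2h+1)}; Condition (C0) together with the ordering \eqref{e:0<H1=<...=<HM<1} guarantees these limits are \emph{simple}. Simplicity makes each eigenvalue an analytic function of the matrix entries near the limit, with first-order derivative expressible through the associated eigenprojection, so the delta method promotes the matrix CLT to a joint CLT for the eigenvalues. Composing with $\log_2(\cdot)$, which is licit since the eigenvalues are asymptotically positive by Theorem~\ref{t:H-hat_m_is_consistent}, yields \eqref{e:scale_invariance_of_the_distribution} with a symmetric positive semidefinite ${\mathbf \Upsilon}_{\bf B}$.

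For part (ii), I would express $\hat H_m^{\rm M,bc}-H_m$ as the weighted, window-averaged linear functional of the log-eigenvalues and split it by adding and subtracting $\log_2\lambda_m^{(\tau)}(a(N)2^j)$. The stochastic part is a fixed linear image of the vector in \eqref{e:scale_invariance_of_the_distribution}, hence Gaussian, with ${\mathbf \Sigma}_{\bf B}$ obtained by applying that linear map to ${\mathbf \Upsilon}_{\bf B}$; note that $\sqrt{N/a(N)}=2^{j_2^0/2}\sqrt{n_{a,j_2^0}}$ matches the rate of (i). The deterministic part is controlled using \eqref{e:lim_n_a*lambda(EW)/a^(2h+1)} and the regression constraints $\sum_j jw_j=1$, $\sum_j w_j=0$, which annihilate the exact power-law term, leaving a residual bias whose decay is governed by the condition $N/a(N)^{2\varpi+1}\to0$ in \eqref{e:a(N)}, so that it vanishes after multiplication by $\sqrt{N/a(N)}$; Slutsky's lemma then gives \eqref{e:H-hat_m_is_asymptotically_normal}. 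I expect the main obstacle to lie in the matrix CLT of the first step, specifically in the interplay between the distinct growth rates $a(N)^{2H_m+1}$: the correct object is the \emph{two-sided} normalization by $a(N)^{-(\HH+\frac12\mathbb{I})}$ rather than a single scalar, and one must verify that after this demixing the rescaled spectrum has a well-separated limiting spectrum so that perturbation theory applies uniformly, while simultaneously establishing the genuine asymptotic independence of non-overlapping windows and identifying the resulting block structure of ${\mathbf \Upsilon}_{\bf B}$.
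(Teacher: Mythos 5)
Your overall architecture (matrix CLT for the rescaled wavelet spectra, transfer to eigenvalues via perturbation of simple eigenvalues, then a linear/delta-method step for the regression functional) is the same as the paper's, which itself adapts Theorems 3.1--3.2 of \cite{abry:boniece:didier:wendt:2022:prob}. Your treatment of part (ii) --- splitting off the deterministic bias, using $\sum_j w_j = 0$, $\sum_j j w_j = 1$ and the condition $N/a(N)^{2\varpi+1}\to 0$ to kill it at rate $\sqrt{N/a(N)}$ --- also matches.

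The genuine gap is in the eigenvalue-transfer step. You present it as a standard delta method: ``the expected spectrum converges to a fixed matrix whose eigenvalues are the limits $\xi_m(2^j)$,'' so simplicity of those limits lets you differentiate. But the two-sided rescaled matrix $a^{-(\HH+\frac12\mathbb{I})}\mathbf{S}^{(\tau)}(a2^j)\,a^{-(\HH^\top+\frac12\mathbb{I})}$ (or the paper's demixed version $\widehat{\mathbf B}^{(\tau)}_a(2^j)$) converges to ${\mathbf B}(2^j)$, whose eigenvalues are \emph{not} the $\xi_m(2^j)$: each $\xi_m(2^j)$ is the limit of $\hat\lambda_m(a2^j)/a^{2H_m+1}$ with a \emph{different scalar} normalization per $m$, and it arises as a Schur-complement-type functional $\varphi(\widetilde{\mathbf w})={\mathbf u}^*{\boldsymbol\Lambda}{\mathbf u}$ of the blocks ${\mathbf B}_{22},{\mathbf B}_{23},{\mathbf B}_{33}$ associated with the index sets ${\mathcal I}_0,{\mathcal I}_+$, not as an eigenvalue of a fixed limit. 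Consequently the map you need to differentiate is the $N$-dependent $f_{N,m}({\mathbf B})=\log_2\lambda_m\big({\mathbf W}a^{\diag(\Hv)}{\mathbf B}\,a^{\diag(\Hv)}{\mathbf W}^*/a^{2H_m}\big)$, whose partial derivatives carry factors $\langle{\mathbf w}_{i}(N),{\mathbf u}_m(N)\rangle\, a^{H_{i}-H_m}$ that \emph{diverge} in the exponential factor for $i\in{\mathcal I}_+$; the proof only goes through because the eigenvector components compensate, i.e.\ one must prove convergence of these products (the limits ${\mathbf x}_{*,m}$ of \eqref{e:x_*,m}, Lemma B.8 and Proposition B.1 in the cited work, where (C0) is actually used to get ${\mathbf u}_m(N)\stackrel{\bbP}{\to}{\mathbf u}_m$). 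You flag ``the interplay between the distinct growth rates'' as the main obstacle, but your proposed fix --- checking that the demixed spectrum has a well-separated limit --- does not address it: separation of the spectrum of ${\mathbf B}(2^j)$ is not the issue; the missing ingredient is the compensated convergence of the eigenvector/rate products. A secondary overstatement: exact block-diagonality of ${\mathbf\Upsilon}_{\bf B}$ across windows $\tau$ does not follow from the theorem's hypotheses; wavelet coefficients in distinct windows are only polynomially decorrelated, and the paper treats decorrelation across windows and components only as an \emph{approximation} in its Appendices C--D, not as part of the limit theorem.
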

\begin{proof}
The proof follows from adapting the proof of asymptotic normality for the non bias-corrected multivariate estimator $\underline{\hat H}^{\rm M} $ from Theorem 3.2 in \cite{abry:boniece:didier:wendt:2022:prob}, as detailed in Section~\ref{sec:appendixB} of the supplementary material.
\end{proof}

\section{Empirical finite-size estimation performance}
\label{sec.MC}

The goal of the present section is to complement the theoretical study of estimation performance in the limit of large scales and large sample sizes. We do so by investigating their finite sample size counterparts by means of Monte Carlo simulations, conducted on synthetic mfBm. Different parametric configurations $({\mathbf \Sigma}, {\mathbf W}, \underline{H})$ are used so as to further assess their impact on estimation performance.

The performance of the proposed bias-corrected multivariate estimator $\underline{\hat H}^{\rm M,bc} $ (Eq.~\eqref{eq:bcHurstEstimation}) is compared, in terms of biases, variances, Mean Squared Error (MSE), and covariance structures, to those of the multivariate estimator $\underline{\hat H}^{\rm M} $ (Eq.~\eqref{eq:Mest}) and of the univariate estimator $\underline{\hat H}^{\rm U} $ (Eq.~\eqref{eq:UEst}).

\subsection{Numerical simulation set-up}

The Monte Carlo simulations involve $N_{\rm MC}=1000$ independent realizations of synthetic mfBm.
The increments $\{ B_{{\mathbf \Sigma},{\mathbf W},\Hv}(t+1) - B_{{\mathbf \Sigma},{\mathbf W},\Hv}(t) \}_{t =1,\ldots,N }$ of a mfBm, referred to as multivariate fractional Gaussian noise, form a multivariate stationary Gaussian process with covariance directly stemming from Eq.~\eqref{eq:cov_mfBm}. Thus, it can be synthesized numerically by circulant embedding, cf. e.g.,~\cite{Helgason_H_2011_j-sp_fessmgtsuce}.

Results reported here are restricted to $M=6$ for the sake of clarity~;
additional results for $M=20$, available upon request, led to identical conclusions. The range of sample sizes covered is $N \in \{2^{13},\ldots,2^{18} \}$.
The parameters ${\mathbf \Sigma}$, ${\mathbf W}$ and $\Hv$ are chosen to investigate a set of representative configurations: \\
\noindent {\bf Config1} (${\mathbf \Sigma} \neq \mathbb{I}$, ${\mathbf W} \neq \mathbb{I}$, $H_m$ all different) 
Generic situation involving mixing, nonzero correlations amongst components, as well as different selfsimilarity parameters. This type of situation calls for multivariate estimators ($\underline{\hat H}^{\rm M,bc} $, $\underline{\hat H}^{\rm M} $ )~; \\
\noindent {\bf Config2} (${\mathbf \Sigma} \neq \mathbb{I}$, ${\mathbf W} \neq \mathbb{I}$, $H_m$ all equal) 
Generic situation with mixing and correlations amongst components. Yet, all selfsimilarity parameters are assumed equal. Thus, univariate estimation $\underline{\hat H}^{\rm U} $ could be used in this multivariate setting, as much as the $\underline{\hat H}^{\rm M,bc} $, $\underline{\hat H}^{\rm M} $~; \\
\noindent {\bf Config3} (${\mathbf \Sigma} \neq \mathbb{I}$, ${\mathbf W} =\mathbb{I}$, $H_m$ all different)
Nonmixing situation, ence, univariate  $\underline{\hat H}^{\rm U} $ and multivariate estimation ($\underline{\hat H}^{\rm M,bc}, \underline{\hat H}^{\rm M}$) are all legitimate procedures~; \\
\noindent {\bf Config4} (${\mathbf \Sigma} \neq \mathbb{I}$, ${\mathbf W} =\mathbb{I}$, $H_m$ all equal)
Nonmixing situations, yet, all selfsimilarity parameters are equal, and thus both univariate $\underline{\hat H}^{\rm U} $ and multivariate estimations ($\underline{\hat H}^{\rm M,bc}, \underline{\hat H}^{\rm M}$) should be appropriate procedures.\\
\noindent {\bf Config5} (${\mathbf \Sigma} = \mathbb{I}$, ${\mathbf W} \neq \mathbb{I}$, $H_m$ all different) 
Mixing situations with different selfsimilarity parameters, yet no correlation. 

A generic covariance matrix ${\mathbf \Sigma}$ is chosen: ${\mathbf \Sigma}_{m,m'}=0.7^{\vert m-m' \vert}$, $1 \leq m \leq m' \leq M$.
A mixing matrix ${\mathbf W}$ is randomly drawn from the set of  $M \times M$ real-valued non-orthogonal matrices and kept fixed for all experiments.
The selfsimilarity parameter vector $\Hv=(H_1,\ldots,H_6)$ is either set to $H_m = 0.7$, $\forall m$, or $\Hv = (0.4,0.5,0.6,0.65,0.7,0.8)$.
These different
configurations are chosen so that {\sc Condition (C0)} holds -- this condition is required for asymptotic normality claims (cf. Section~\ref{sec:C0}).
Wavelet analysis is performed using the least asymmetric Daubechies2 mother wavelet~\cite{mallat:1999}.
Identical conclusions can be obtained using other  wavelets satisfying $N_\psi \geq 2$.
Linear regressions are performed, with $w_j$ as defined in \cite{wendt2017multivariate}, over scales that depend on $N$, as analyzed theoretically in Section~\ref{sec:asymtpf} (Eq.~\eqref{e:a(N)}): 
$2^{j_1}=a(N)2^6$ to $2^{j_2}=a(N)2^{9}$, where $a(N) = 2^{\lfloor \beta \log_2 (N/N_0) \rfloor}$ with $ \beta=0.9$ and $N_0=2^{13}$.

\subsection{Asymptotic normality assessment}

The normality of $\underline{\hat H}^{\rm M,bc}$ is assessed by computing quantile-quantile plots for the squared Mahalanobis distance,
$$(\underline{\hat H}^{\rm M,bc} - \bbE \underline{\hat H}^{\rm M,bc}) \mathrm{Var}(\underline{\hat H}^{\rm M,bc})^{-1} \; (\underline{\hat H}^{\rm M,bc} - \bbE \underline{\hat H}^{\rm M,bc})^\top,$$
where the ensemble average $\E$ is
computed as the mean across independent realizations.
Samples of this Mahalanobis distance
 are plotted against a $\chi^2$ distribution with $M$ degrees of freedom, since this distribution is expected to hold under the exact joint normality of $\underline{\hat H}^{\rm M,bc}$. Fig.~\ref{fig:asymptoticNormalityN} reports such quantile-quantile plots, for the different configurations and for different sample sizes. It shows very satisfactory straight lines, even at the smallest sample size $N=2^{13}$. This suggests that the joint normality for $\underline{\hat H}^{\rm M,bc}$ holds very satisfactorily for finite sample size estimation, even for small sample sizes. It thus indicates fast convergence to asymptotic normality, proven theoretically in Theorem~\ref{t:scale_invariance_of_the_distribution} (Eq.~\eqref{e:H-hat_m_is_asymptotically_normal}).

\begin{figure}[!t]
\centerline{\includegraphics[width=\linewidth]{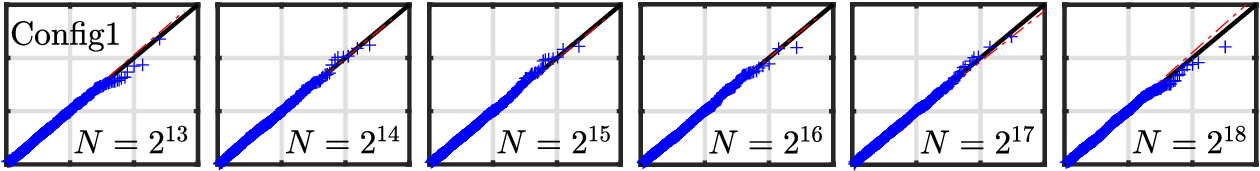}}
\centerline{\includegraphics[width=\linewidth]{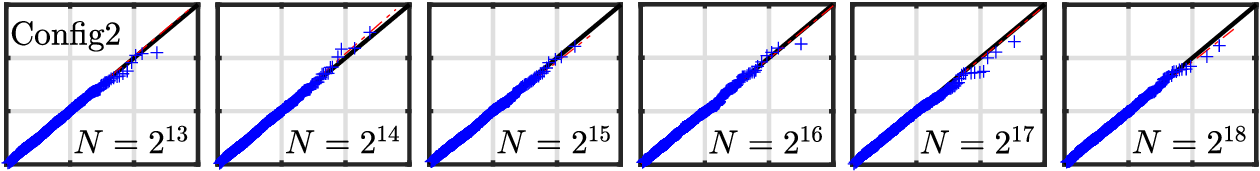}}
\centerline{\includegraphics[width=\linewidth]{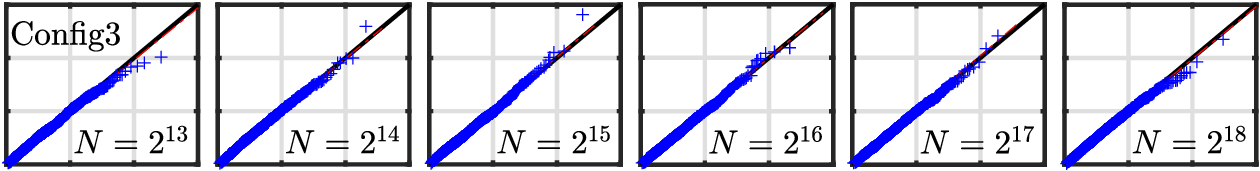}}
\centerline{\includegraphics[width=\linewidth]{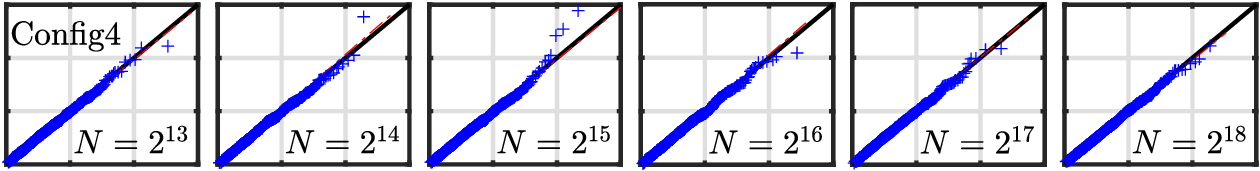}}
\centerline{\includegraphics[width=\linewidth]{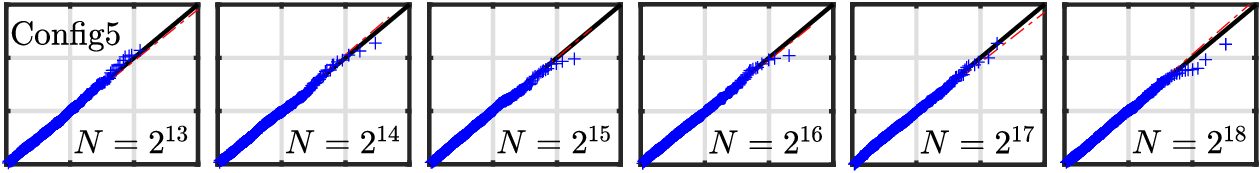}}
\vspace{-.1cm}
\caption{\label{fig:asymptoticNormalityN} {\bf Finite-size joint normality for $\boldsymbol{\underline{\hat H}^{\rm M,bc}}$.} Quantile-quantile plots of the empirical distribution of the Mahalanobis distance computed from the bias-corrected multivariate estimates $\underline{\hat H}^{\rm M,bc}$ against a $\chi^2$ distribution with $M$ degrees of freedom (expected theoretically under joint normality), for different sample sizes $N$ and different configurations.}
\end{figure}

\subsection{Estimation performance: bias, covariance, MSE}

For any estimate of the vector of selfsimilarity parameters $\underline{\hat H}$, matrices of bias, covariance and MSE are defined as
\begin{align}
\textrm{Bias}^2(\underline{\hat H}) & =  ( \mathbb{E}\underline{\hat H} - \underline{H} )(\mathbb{E}\underline{\hat H} - \underline{H} )^\top, \\
\textrm{Cov}(\underline{\hat H}) & =  \mathbb{E} \left[ (\mathbb{E}\underline{\hat H} -\underline{\hat H} ) (\mathbb{E}\underline{\hat H} -\underline{\hat H} )^\top \right], \\
\mathrm{MSE}(\underline{\hat H}) & = \mathbb{E} \left[ (\underline{\hat H}- \underline{H} ) (\underline{\hat H}- \underline{H} )^\top \right].
\end{align}
The estimation performance is quantified by the spectral norm of these matrices, i.e., by the largest absolute value of their eigenvalues ~\cite{meyer2023matrix}.
\begin{figure}[!t]
\centering
\includegraphics[width=\linewidth]{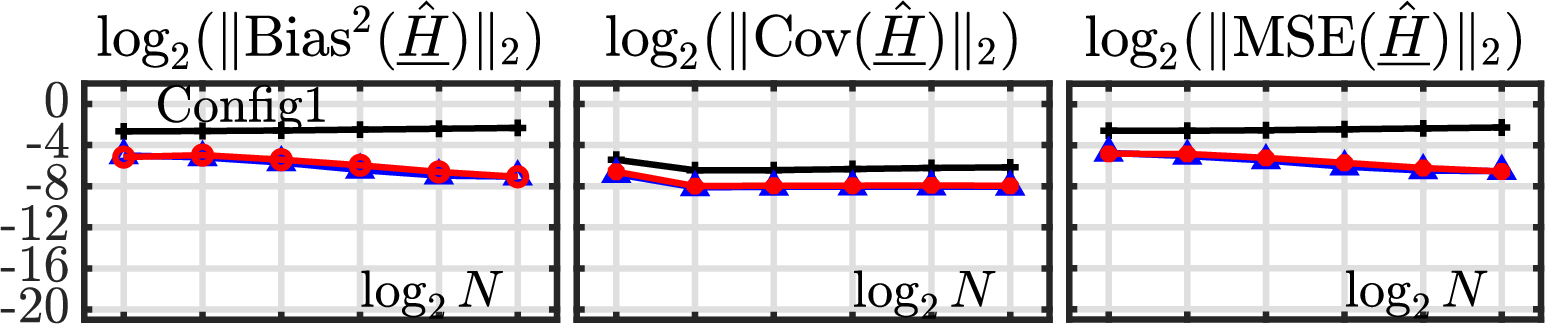}
\includegraphics[width=\linewidth]{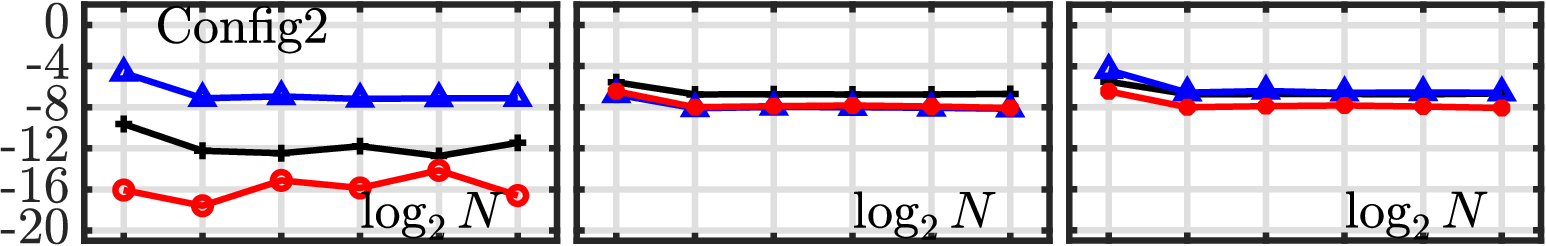}
\includegraphics[width=\linewidth]{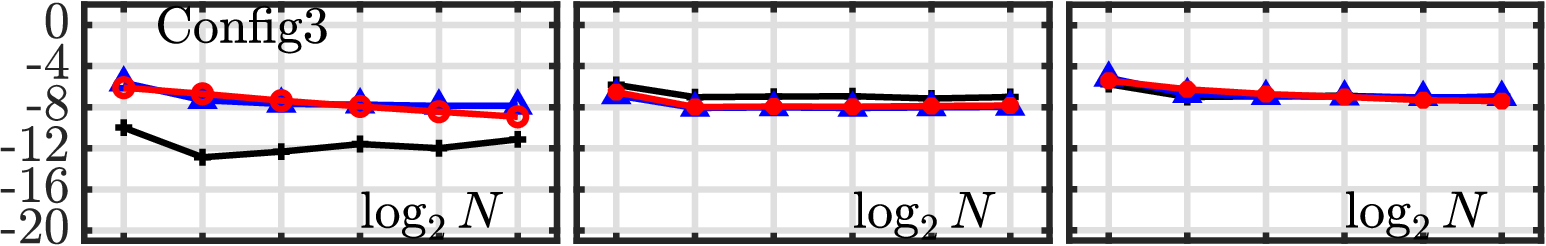}
\includegraphics[width=\linewidth]{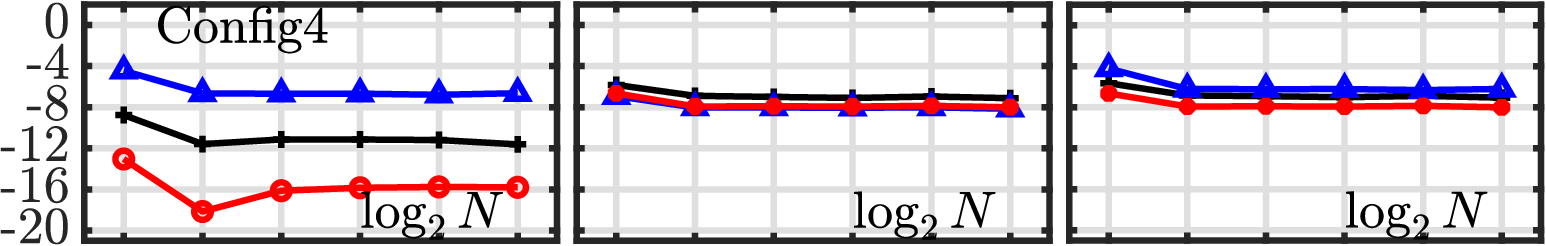}
\includegraphics[width=\linewidth]{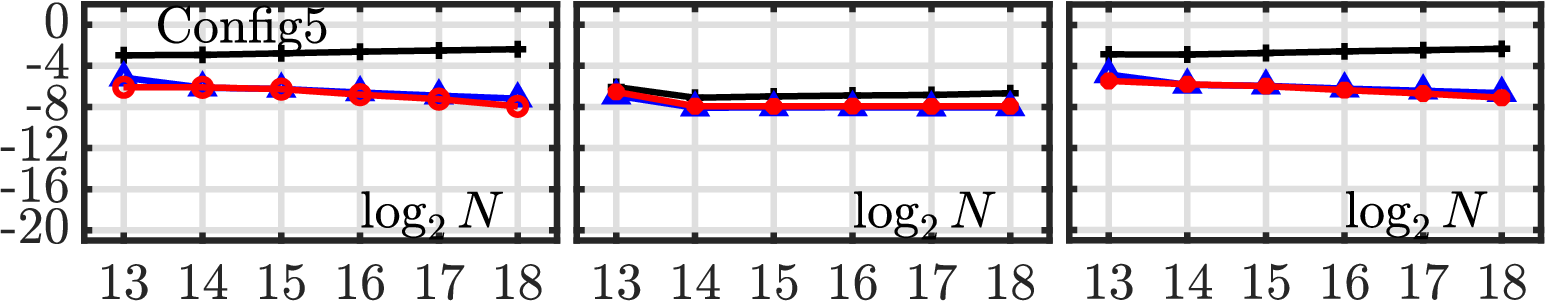}
\caption{\label{fig:figperf}
{\bf Estimation performance.} {Logarithm of the spectral norms of the bias (left), covariance (middle) and MSE (right) for  the proposed bias-corrected multivariate estimator $\underline{\hat H}^{\rm M,bc}$ (red), compared to those for the classical multivariate estimator $\underline{\hat H}^{\rm M}$ (blue) and for the univariate estimator $\underline{\hat H}^{\rm U}$ (black), as functions of the logarithm of the sample size $N$.}
}
\end{figure}

Fig.~\ref{fig:figperf} reports the (logarithms of the) spectral norms of the bias, covariance and MSE for the proposed bias-corrected multivariate estimator $\underline{\hat H}^{\rm M,bc}$, compared to those for the classical multivariate estimator $\underline{\hat H}^{\rm M}$ and for the univariate estimator $\underline{\hat H}^{\rm U}$. The results are shown as functions of the logarithm of the sample size $N$ and for each of the different configurations.

For the generic case represented by {\bf Config1} (Fig.~\ref{fig:figperf}, first row), the univariate estimator $\underline{\hat H}^{\rm U}$ suffers from significant bias stemming from the component mixing caused by ${\mathbf W}$. Thus, it is significantly outperformed by the multivariate estimators $\underline{\hat H}^{\rm M}$ and $\underline{\hat H}^{\rm M,bc}$. The latter show equivalent performance as the combination of  ${\mathbf \Sigma}$, ${\mathbf W}$ and different $H_m$ yield different eigenvalues for $\E {\mathbf S}(2^j)$ at all scales and thus small \emph{repulsion bias} in their estimated values.

For {\bf Config2} (Fig.~\ref{fig:figperf}, second row), with mixing yet equal $H_m$, the eigenvalues of $\E {\mathbf S}(2^j)$ lie much closer to each other at most scales and hence suffer from substantial repulsion biases in their estimation.
This results in a significant bias for the classical multivariate estimator $\underline{\hat H}^{\rm M}$.
Fig.~\ref{fig:figperf}, middle row, shows that the proposed bias-corrected multivariate estimator $\underline{\hat H}^{\rm M,bc}$ succeeds in significantly reducing bias without an increase in the variance. This leads to an overall much improved estimation performance as quantified by the MSE.
While the univariate estimator $\underline{\hat H}^{\rm U}$ displays small bias, as expected in view of the equality of all selfsimilarity parameters, it also shows greater variance, thus, greater MSE, when compared to the proposed $\underline{\hat H}^{\rm M,bc}$.

For the nonmixing case, {\bf Config3} (Fig.~\ref{fig:figperf}, third row), the univariate estimator $\underline{\hat H}^{\rm U}$ naturally shows much smaller biases compared to the multivariate estimators $\underline{\hat H}^{\rm M}$ and $\underline{\hat H}^{\rm M,bc}$. Yet, it also displays larger variances. Thus, as a result of bias-variance trade-off, the MSE of all three estimators are equivalent.
Importantly, this shows that even for data corresponding to nonmixing situations, there is no cost in estimation performance associated with the use of multivariate estimators.

For the nonmixing case with equal $H_m$, {\bf Config4} (Fig.~\ref{fig:figperf}, fourth row), the univariate estimator $\underline{\hat H}^{\rm U}$ and bias-corrected multivariate estimator $\underline{\hat H}^{\rm M,bc}$ perform similarly to {\bf Config2}. This suggests that, for $H_m$ all equal, the presence of mixing has no impact in terms of bias or variance. However, estimation performance of the multivariate estimator $\underline{\hat H}^{\rm M}$ is slightly more affected than in {\bf Config2}, showing that the repulsion bias is more substantial in the absence of mixing.

Finally, for the mixing case with different $H_m$ and no correlation, {\bf Config5} (Fig.~\ref{fig:figperf}, fifth row), the performances are similar to those for {\bf Config1} for all three estimators, indicating their robustness against component correlations ${\mathbf \Sigma}$.

All together, the results reported in Fig.~\ref{fig:figperf} lead us to conclude that the proposed bias-corrected multivariate estimator $\underline{\hat H}^{\rm M,bc}$ displays the best estimation performance regardless of the actual configuration of the parameters.
In practice it is not a priori known whether or not mixing is present, or whether selfsimilarity parameters are pairwise distinct or equal. We have seen that the estimator $\underline{\hat H}^{\rm U}$ fails when mixing is present. Also, we have found out that the estimator $\underline{\hat H}^{\rm M}$ fails when repulsion bias is present, i.e., when one needs to estimate identical selfsimilarity parameters. By contrast, $\underline{\hat H}^{\rm M,bc}$ displays neither one of these limitations -- this versatility is clearly an important property that motivates its practical use.

\subsection{Covariance structure of $\underline{\hat H}^{\rm M,bc}$}

\subsubsection{Correlation structure}

\begin{figure}[!t]
\centering
\includegraphics[width=\linewidth]{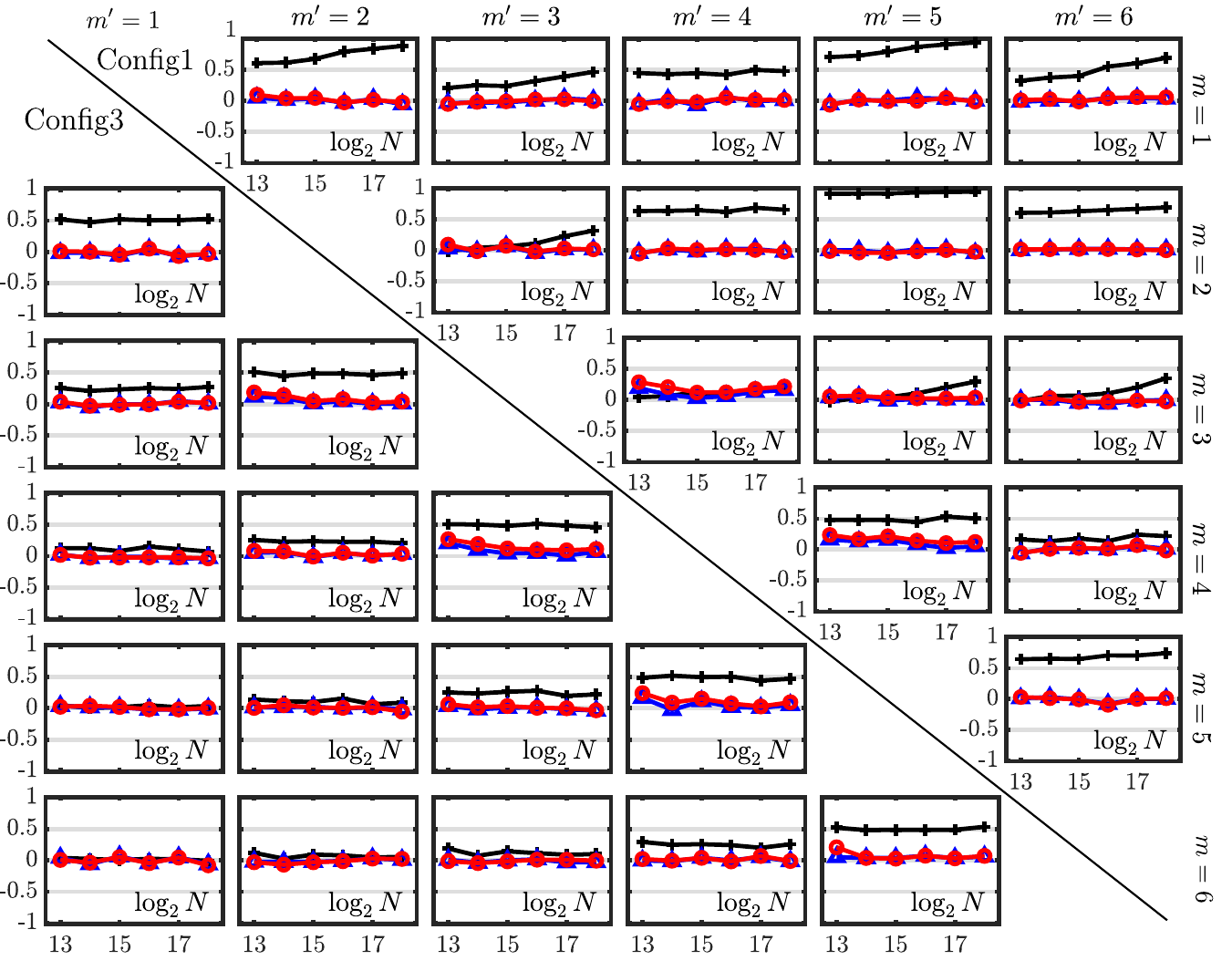}
\caption{\label{fig:CorrHm} {\bf Correlation structures of $\boldsymbol{\underline{\hat H}^{\rm M,bc}}$, $\boldsymbol{\underline{\hat H}^{\rm M}}$ and $\boldsymbol{\underline{\hat H}^{\rm U}}$.}
Correlation coefficients between entries $\hat H_m$ and $\hat H_{m'}$, $1 \leq m \neq m' \leq M$,
for $\underline{\hat H}^{\rm M,bc}$ (red), $\underline{\hat H}^{\rm M}$ (blue) and $\underline{\hat H}^{\rm U}$ (black), as functions of sample sizes $N$, for {\bf Config1} (upper right triangle) and {\bf Config3} (lower left triangle) .}
\end{figure}

We complement the estimation performance assessment by turning to the dependence structure amongst estimates -- the latter constitutes a key feature in the practical use of the estimators of the selfsimilarity parameters. Asymptotic normality, as well as finite sample size approximate normality, hint on the importance of the correlation structure amongst estimated selfsimilarity parameters. Fig.~\ref{fig:CorrHm} thus compares correlation coefficients between pairs $(\hat H_m,\hat H_{m'})$, $1 \leq m < m' \leq M$, as functions of the sample size $N$, for the three different estimators $\underline{\hat H}^{\rm M,bc}$, $\underline{\hat H}^{\rm M}$ and $\underline{\hat H}^{\rm U}$. The correlation structures are compared only for \textbf{Config1} and \textbf{Config3}, in upper and lower triangles respectively. The conclusions drawn about the other configurations are similar.

Fig.~\ref{fig:CorrHm} clearly indicates that most pairs $(\hat H^{\rm U}_m, \hat H^{\rm U}_{m'})$, $1 \leq m < m' \leq M$, remain significantly correlated, even for large sample sizes. By contrast, for the multivariate estimators, the pairs $(\hat H^{\rm M,bc}_m, \hat H^{\rm M,bc}_{m'})$ and $(\hat H^{\rm M}_m, \hat H^{\rm M}_{m'})$, $1 \leq m < m' \leq M$ are quasi-decorrelated, for all sample sizes. These observations partly explain why the overall contribution of the variance to the MSE was larger for the univariate estimators when compared to the multivariate estimators (cf.\ Fig.~\ref{fig:figperf} above).
These empirical observations are complemented and framed by the analytical calculations detailed in Sections~\ref{sec:appendixC}~and~\ref{sec:appendixD} of the supplementary material.
{These show the following striking fact. Suppose {\sc Condition (C0)} holds, and further assume that the weak correlation structure amongst wavelet coefficients of mfBm can be approximated into exact decorrelation. Then, the covariance matrices ${\mathbf \Sigma}_{\bf B}$ in Eq.~\eqref{e:scale_invariance_of_the_distribution} and ${\mathbf \Upsilon}_{\bf B}$ in Eq.~\eqref{e:H-hat_m_is_asymptotically_normal} of Theorem~\ref{t:scale_invariance_of_the_distribution} are well approximated, to the first order, by diagonal matrices.}
These empirical observations and analytical calculations reveal an important advantage of the mutivariate $\underline{\hat H}^{\rm M,bc}$ (and $\underline{\hat H}^{\rm M}$) against the univariate $\underline{\hat H}^{\rm U}$, even for nonmixing data: {namely, that the former consist of asymptotically Gaussian, weakly dependent random vectors. This is a major practical feature, e.g., in the design of tests for the equality of selfsimilarity parameters, which is of significant interest in many applications \cite{lucas2021bootstrap,lucas2022counting}.}

\begin{figure}[!t]
\centerline{
\includegraphics[width=\linewidth]{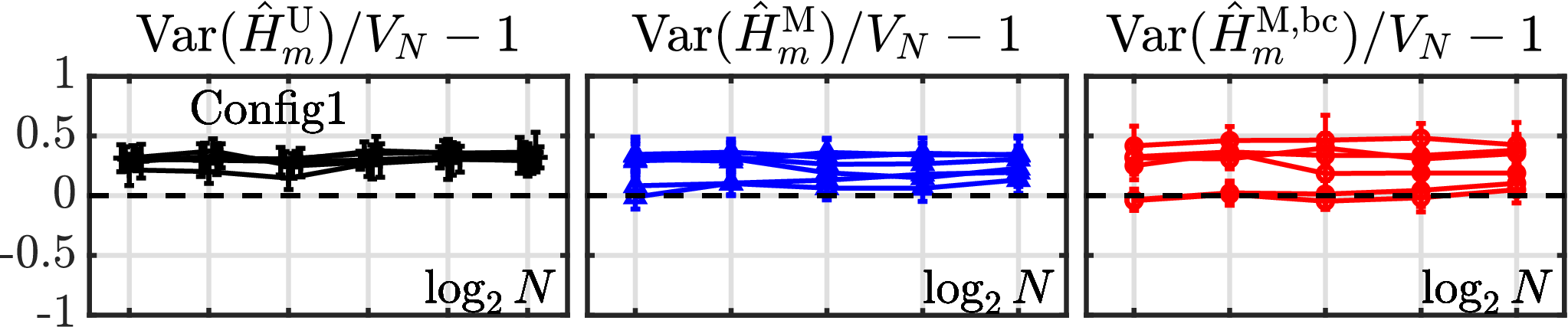}
}
\centerline{
\includegraphics[width=\linewidth]{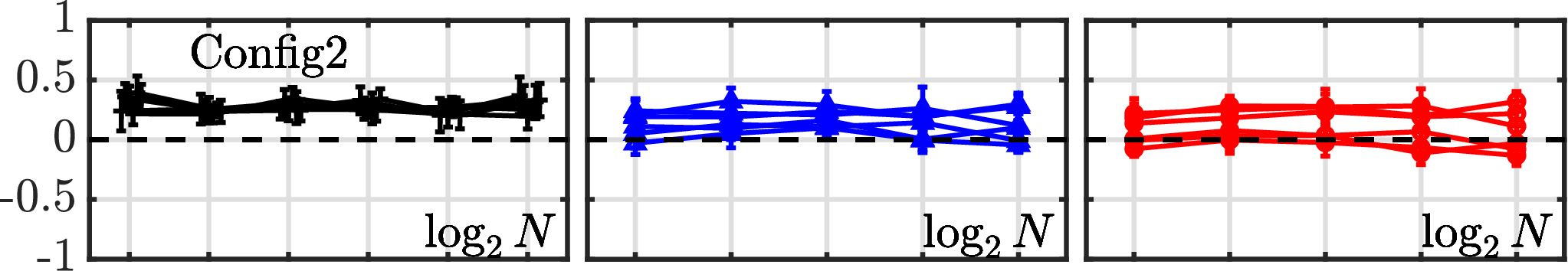}
}
\centerline{
\includegraphics[width=\linewidth]{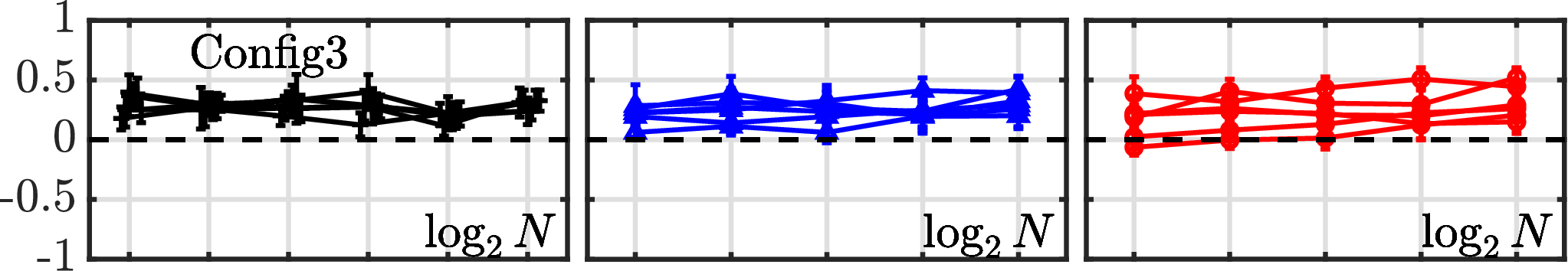}
}
\centerline{
\includegraphics[width=\linewidth]{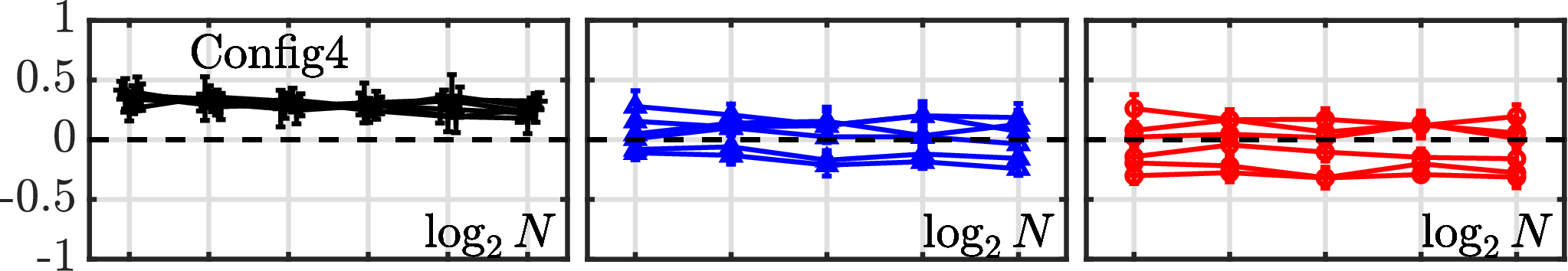}
}
\centerline{
\includegraphics[width=\linewidth]{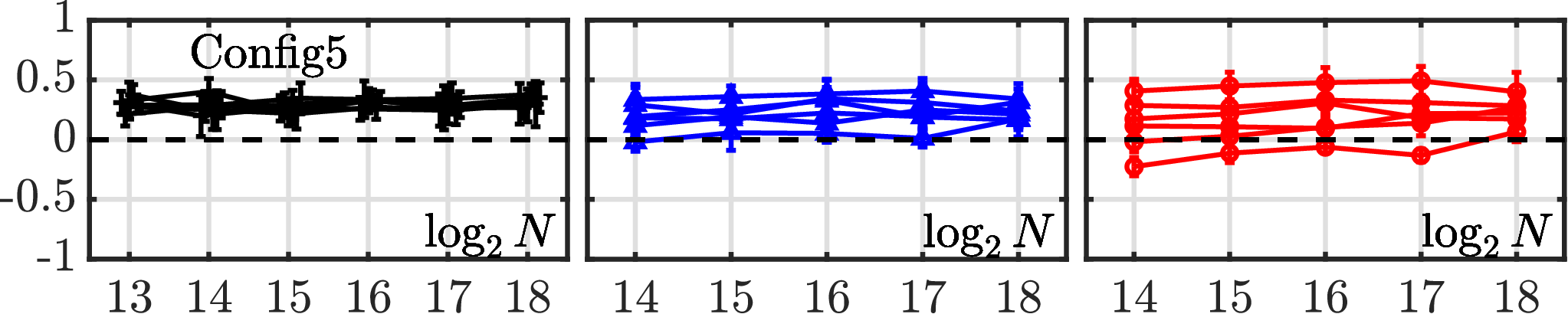}
}
\caption{\label{fig:varN} {\bf Approximation of the variance.} Relative differences between empirical variances of the (left, black) univariate estimator $\hat H^{\rm U}_m$, (middle, blue) multivariate estimator $\hat H^{\rm M}_m$ and (right, red) bias-corrected multivariate estimator $\hat H^{\rm M,bc}_m$ and the approximate variance $V_N$ (Eq.~\eqref{eq:varApprox}) with superimposed components $m=1,\ldots,6$ as functions of the sample size $N$.
}
\end{figure}

\subsubsection{Variance}

Besides approximate asymptotic decorrelation, the diagonal entries of the matrices
${\mathbf \Sigma}_{\bf B}$ and ${\mathbf \Upsilon}_{\bf B}$ remain to be discussed.
The calculations reported in Remark \ref{remB} in Section \ref{sec:appendixB}, and in Sections \ref{sec:appendixC}  and \ref{sec:appendixD} of the supplementary material show that these diagonal entries can be first-order approximated by a constant:
\begin{equation}
{\mathbf \Upsilon}_{\bf B} \simeq  2 \, {(\log_2 e )^2} \,  \mathbb{I} \makebox{ and }
{\mathbf \Sigma}_{\bf B} \simeq   \frac{(\log_2 e )^2}{2} \sum_{j=j_1^0}^{j_2^0} {w_j^2}{2^{j}} \, \, \mathbb{I}. \label{eq:varApproxB}
\end{equation}
Specifically, this implies that, under the approximation of decorrelation between wavelet coefficients of mfBm and under (C0), the finite-size variances of $\hat H^{\rm M,bc}_m$ and $\hat H^{\rm M}_m$, for every $m \in \{1,\ldots,M\}$, can be approximated as:
\begin{equation}
\label{eq:varApprox}
\mathrm{Var}(\hat H^{\rm M,bc}_m) \approx \mathrm{Var}(\hat H^{\rm M}_m) \approx \displaystyle \frac{(\log_2 e )^2}{2} \sum_{j=j_1}^{j_2} \frac{w_j^2}{n_{j}} := V_N.
\end{equation}
When $N \rightarrow +\infty$, one obviously has {$(N/a(N)) V_N \rightarrow $ $ \frac{(\log_2 e )^2}{2} \sum_{j=j_1^0}^{j_2^0} {w_j^2}{2^{j}} $.
Thus, in conclusion, $\mathrm{Var}(\hat H^{\rm M}_m)$ and $\mathrm{Var}(\hat H^{\rm M,bc}_m)$ decrease with sample size $N$ and depend on the range of regressions scales $j_1 $ and $j_2$. However, to the first order, these depend on none of the model parameters ${\mathbf \Sigma}$, ${\mathbf W}$ and  $\underline{H}$. This renders them comparable for all components.}
Interestingly, it was shown in ~\cite{wendt2017multivariate} that, for the univariate $\hat H^{\rm U}_m$ in nonmixing settings, the variance can be approximated to the first order by the same closed-form relation: $\mathrm{Var}(\hat H^{\rm U}_m) \approx V_N$.

To further assess the practical relevance of the finite sample size approximation in Eq.~\eqref{eq:varApprox}, Fig.~\ref{fig:varN} reports the relative difference
between the empirical variances of $\hat H^{\rm U}_m$, $\hat H^{\rm M}_m$ and $\hat H^{\rm M,bc}_m$ and $V_N$ as a function of the logarithm of the sample size $N$ for $m=1,\ldots,6$ and for the different configurations.
Fig.~\ref{fig:varN} (left column), first, shows that, in all configurations, the variance of the univariate estimator $\hat H^{\rm U}_m$ is approximated by $V_N$ with a $20\%$ relative error resulting from the non-exact decorrelation of eigenvalues between scales.
This first result extends the results reported in~\cite{wendt2017multivariate}, limited to nonmixing cases. Fig.~\ref{fig:varN}, second, confirms that the variances of the three estimators neither depend on $\underline{H}$ (\textbf{Config1} vs. \textbf{Config2}),
nor on ${\mathbf W}$ (\textbf{Config1} vs.\ \textbf{Config3}), nor on ${\mathbf \Sigma}$ (\textbf{Config1} vs. \textbf{Config5}).
Fig.~\ref{fig:varN}, third, confirms that the variances of the multivariate estimators are, on average, lower than those of the univariate estimator.
In summary, Fig.~\ref{fig:varN} shows that the closed-form approximation $V_N$ in Eq.~\eqref{eq:varApprox}
is valid in all configurations for the three estimators, even for small sample sizes.
This constitutes a valuable result for practical use in applications, e.g., when testing the equality of selfsimilarity parameters.

\section{Application to epileptic seizure prediction}
 \label{sec.appli}

Selfsimilarity has been broadly used to model temporal dynamics in macroscopic brain activity. The purpose of the modeling effort is either to study healthy brain mechanisms (cf. e.g., \cite{He2014,CIUCIU:2014:A,la2018self}), or to diagnose brain disorders and pathologies. Notably, there is a particular interest in epileptic seizure prediction \cite{domingues2019multifractal}.
Yet, most studies have so far remained based on univariate analysis of scale-free dynamics, while brain dynamics is naturally conducive to multivariate analysis.
The aim of this section is, thus, to comparatively assess the potential of multivariate and univariate selfsimilarity parameter estimations in the detection of 
ictal, preictal and interictal states against each other (i.e., time periods during seizure, prior to seizure onsets or distant from any seizure, respectively).

The multi-channel EEG recordings studied here were collected at the Boston Children's Hospital from pediatric patients with medically intractable seizures,
from the CHB-MIT Scalp EEG database \cite{goldberger2000physiobank}, available at \url{https://physionet.org/content/chbmit/1.0.0/}.
EEG recordings stem from the International 10-20 system of EEG electrode positions and nomenclature, last for at least one hour, and are sampled at 256Hz.
Multivariate analyses are performed for the 19 channels, within sliding 16 second-long wondows. 
Wavelet analysis is performed using the least asymmetric Daubechies2 mother wavelet.

Fig.~\ref{fig:structure function_patients} reports, for ictal, preictal and interictal windows, the logarithms of the diagonal entries of the wavelet spectrum ${\mathbf S}(2^j)$, $\log_2 {\mathbf S}_{m,m}(2^j)$, the log-eigenvalues $\log_2 \hat \lambda_m(2^j)$ and the proposed bias-corrected log-eigenvalues $\overline{\log_2 \lambda_m}(2^j)$, all of them as functions of scales $2^j$ for $m=1,\ldots,19$.
Fig.~\ref{fig:structure function_patients} shows linear behaviors for these functions across fine scales $1\leq 2^j \leq 2^7$, that hence indicate scale-free dynamics in these EEG time series, at frequencies above $1$Hz, in agreement with the literature on epilepsy,~cf. e.g., \cite{gadhoumi2015scale}.
These scale-free dynamics hold for the three different states and both component-wise (univariate, $\log_2 {\mathbf S}_{m,m}(2^j)$) and jointly (multivariate, $\log_2 \hat \lambda_m(2^j)$ and $\overline{\log_2 \lambda_m}(2^j)$).
The observed multivariate selfsimilarity constitutes per se a significant outcome of this work as selfsimilarity in EEG time series had so far only been observed component per component and not jointly.
Combined with literature suggesting that differences between preictal and interictal dynamics are related to frequencies ranging from $10$Hz to $85$Hz \cite{gadhoumi2015scale}, these observations suggest performing linear regressions for the estimation of $\underline{H}$ across scales ranging from $2^1 \leq 2^j \leq 2^4$.

\begin{figure}[!t]
\centerline{\includegraphics[width=\linewidth]{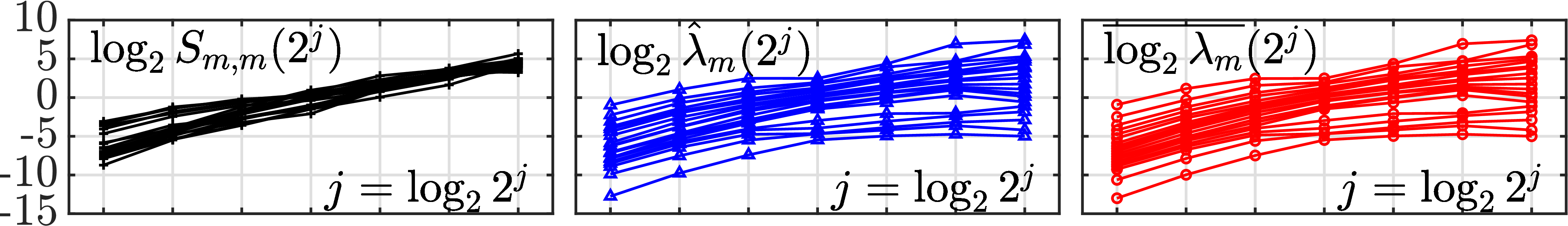}}
\centerline{\includegraphics[width=\linewidth]{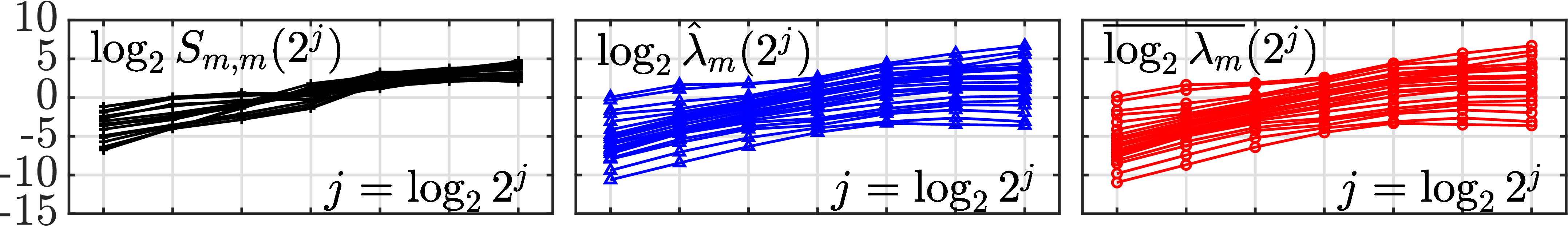}}
\centerline{\includegraphics[width=\linewidth]{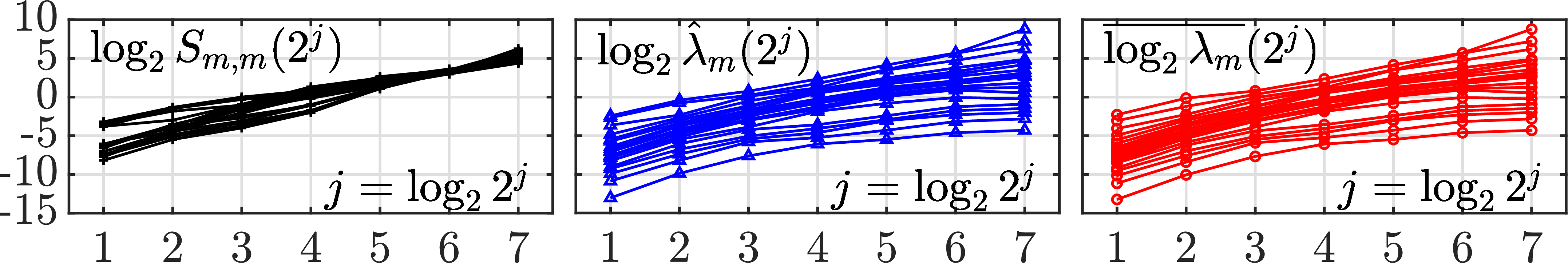}}
\caption{\label{fig:structure function_patients} {\bf Multivariate scale-free dynamics in EEG time series, for ictal, preictal and interictal  states.} Logarithms of the estimated diagonal entries $\log_2 {\mathbf S}_{m,m}(2^j)$ (left, black), estimated eigenvalues $\log_2 \hat \lambda_m(2^j)$ (middle, blue) and bias corrected eigenvalues $\overline{\log_2 \lambda_m}(2^j)$ (right, red) of the wavelet spectrum, for $m=1,\ldots,19$, for  ictal (top), preictal (middle),   interictal (bottom) windows.}
\end{figure}

\begin{figure}[!t]
\centerline{
\includegraphics[width=.335\linewidth]{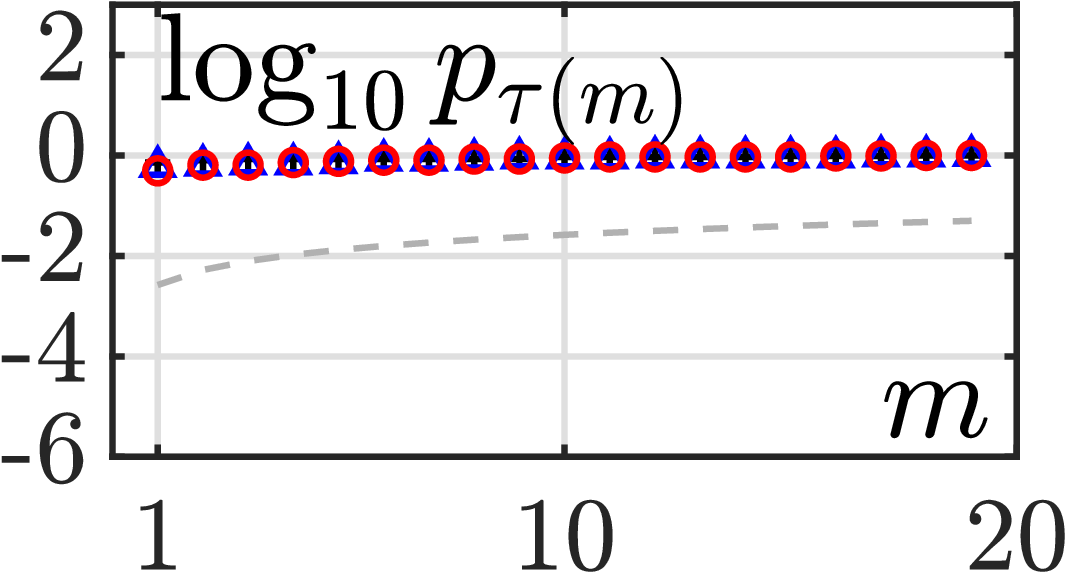}
\includegraphics[width=.3\linewidth]{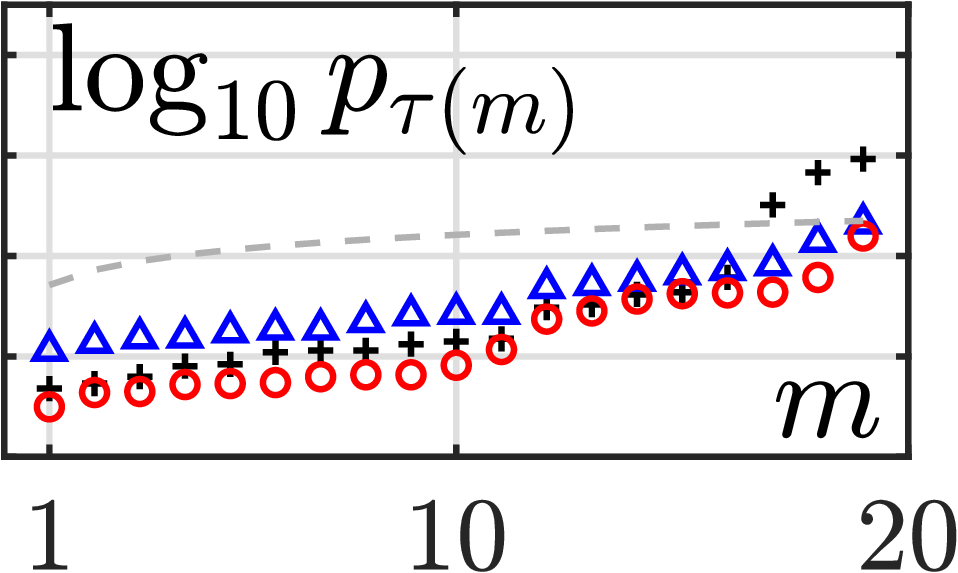}
\includegraphics[width=.3\linewidth]{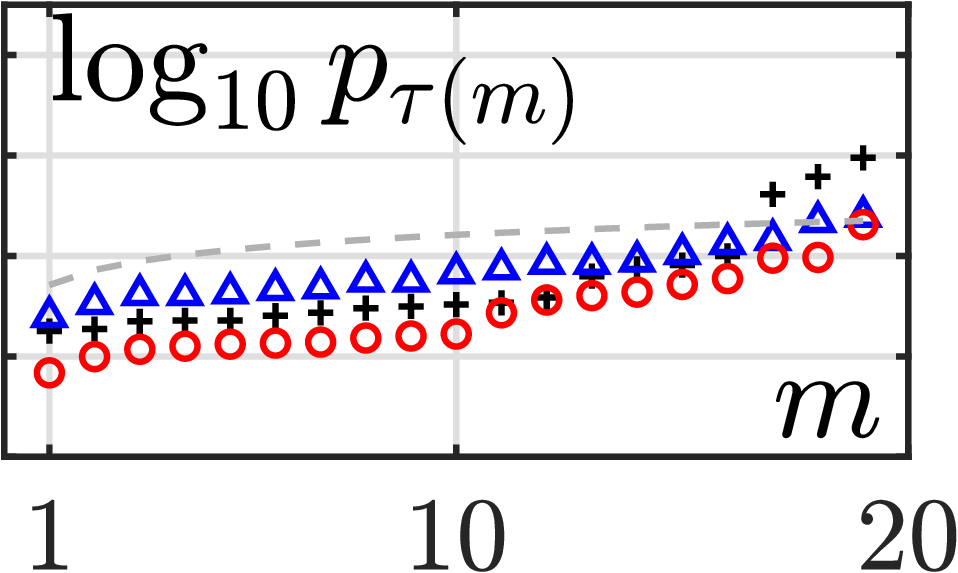}}
\caption{\label{fig:boxplot} {\bf Differences in multivariate selfsimilarities.}
Logarithms of the sorted p-values for Wilcoxon rank-sum tests between ictal / preictal distributions (left), ictal / interictal distributions (middle) and preictal / interictal distributions (right), for univariate estimates  $\hat H^{\rm U}_m$ (black), multivariate estimates $\hat H^{\rm M}_m$ (blue) and bias corrected multivariate estimates $\hat H^{\rm M,bc}_m$ (red), vs.\ Benjamini-Hochberg multiple hypothesis testing thresholds (grey dashed line).}
\end{figure}

Fig.~\ref{fig:boxplot} compares the (logarithms of) sorted Wilcoxon rank-sum test p-values $p_{\tau(m)}$ measuring the significance of the differences between the medians of ictal vs.\ preictal, ictal vs.\ interictal and preictal vs.\ interictal distributions, for  the univariate estimator $\underline{\hat H}^{\rm U} $ (Eq.~\eqref{eq:UEst}), the multivariate estimator $\underline{\hat H}^{\rm M} $ (Eq.~\eqref{eq:Mest}) and the bias-corrected multivariate estimator $\underline{\hat H}^{\rm M, bc} $ (Eq.~\eqref{eq:bcHurstEstimation}).
The p-values are ranked and compared to Benjamini-Hochberg multiple hypothesis thresholds~\cite{benjamini1995FDR}, computed at a false discovery rate $\alpha=0.05$.
For all three estimators, the medians of the distributions of estimated selfsimilarity parameters clearly differ
from ictal/preictal to interictal states but not from preictal to ictal states.
This indicates that the scale-free structure of the temporal dynamics is modified by forthcoming epileptic seizures but not within periods evolving to a seizure.
Moreover, the bias-corrected multivariate estimator $\underline{\hat H}^{\rm M,bc}$ shows stronger differences in selfsimilarity parameters between ictal or preictal states and interictal states, compared to either the univariate estimates $\underline{\hat H}^{\rm U}$ or the multivariate estimates $\underline{\hat H}^{\rm M}$.
This illustrates  the benefits of both taking into account cross-temporal dynamics in EEG time series analysis and correcting the repulsion bias in multivariate estimation for epilepsy seizure prediction.

These analyses are reported for a single patient only (Patient $22$).
Equivalent conclusions were drawn for all other patients.
Of importance, the distributions of selfsimilarity parameters are found to differ significantly from one patient to another. This demands a single-patient procedure to detect ictal / preictal from interictal states.
Interested readers are referred to the in-depth study carried out in~\cite{lucas2023epileptic}.

\section{Conclusions and Perspectives}
 \label{sec.conclusion}

In this work, we have, first, proposed a multivariate selfsimilarity model, defined as a linear mixture of correlated fractional Brownian motions, well-suited to adjust the versatility of real-world multivariate data, and formally related it to a former mathematical model.
Second, we have turned the formal proposition of wavelet spectrum eigenvalue-based estimation for the vector of selfsimilarity parameters proposed in~\cite{abry2018waveleta,abry2018waveletb}, into a versatile and reliable bias-corrected estimation procedure that can actually be used on real-world data.
Its performance was studied mathematically in terms of asymptotics and by numerical simulations for finite-size samples.
This permitted showing that the proposed multivariate estimators work in a broader context (mixing) than the univariate one and still performs better (decorrelation) in the nonmixing case, where the univariate estimator can be used.
Third, its potential was illustrated at work on real multi-channel EEG data in a context of epileptic seizure prediction.
A documented toolbox implementing estimation is publicly available at \url{github.com/charlesglucas/ofbm_tools}.

Future investigations include incorporating a multivariate time-scale bootstrap procedure so as to assess confidence levels in estimation from a single realization of data.
Also, large-dimensional asymptotics will be studied both theoretically and numerically to account for situations where the number of components grows as fast as sample size, of major practical importance in real-world applications.

\bibliographystyle{IEEEtran}
\bibliography{ofbmTSP2023}

\appendix

Fix $\ell \in \bbN$. Throughout the appendices, ${\mathcal S}(\ell,\bbR)$ and ${\mathcal S}_{\geq 0}(\ell,\bbR)$ denote, respectively, the space of symmetric and the cone of symmetric positive semidefinite $\ell \times \ell$ matrices. The notation $\textnormal{vec}_{{\mathcal S}}$ denotes the projection on the free entries of a symmetric matrix. Also, $\pi_{i_1,i_2}$ and $\pi_{i_1,i_2,\hdots,i_\ell}$, $\ell \in \bbN$, denote the projection of a matrix or vector, respectively, on its entries $i_1,i_2$ and $i_1,i_2,\hdots,i_\ell$. We express any vector ${\mathbf v}$ entry-wise as $\{{\mathbf v}_{\ell}\}_{\ell}$.
Recall that $n_{a,j} = \frac{N}{a(N)2^j}$, $ j = j^{0}_1,\hdots,j^{0}_2$.\\

\subsection{Proof of consistency}
\label{sec:appendixA}

\begin{proof}
We only show $(i)$. Fix $j \in \{j_1^0,\ldots,j_2^0\}$, $\tau \in \{1,\ldots,2^{j^0_2-j}\} $ and $m \in \{1,\ldots,M\}$. Suppose, for the moment, that the limit \eqref{e:lim_n_a*lambda/a^(2h+1)} holds. Recall that $\plim_{N \rightarrow \infty}$ denotes limit in probability. Writing $\widetilde a(N) = a(N)2^j$,
\begin{equation}
\begin{aligned}
\xi_m(2^j) & =\plim_{N \rightarrow \infty}\frac{\hat \lambda_{m}^{(\tau)}(a(N)2^j)}{a(N)^{2H_{m}+ 1}} \\
&= \plim_{N \rightarrow \infty}\frac{\hat \lambda_{m}^{(\tau)}\big(\widetilde a(N)\big)}{(\widetilde a(N)2^{-j})^{2H_{m}+ 1}} \\
&= 2^{j(2H_{m}+1)}\xi_m(1).
\end{aligned}
\end{equation}
This establishes the scaling relationship in \eqref{e:lim_n_a*lambda/a^(2h+1)}.

We now need to show the convergence in \eqref{e:lim_n_a*lambda/a^(2h+1)}. The argument follows from a simple adaptation of the technique for establishing Theorem 3.1 in \cite{abry:boniece:didier:wendt:2022:prob}, which originally pertains to a high-dimensional context.

In fact, in that theorem the measurements are given by a high-dimensional signal-plus-noise stochastic process of the form
$Y(t) = {\mathbf P}(N) X(t) + Z(t)$, where, for $p \geq M$, $Y(t), Z(t) \in \bbR^p$, ${\mathbf P}(N) \in {\mathcal M}(p,M,\bbR)$ and $X(t) \in \bbR^M$. $X(t)$ is the latent fractional stochastic process (the ``signal"), $Z(t)$ is the ``noise" component, and ${\mathbf P}(N)$ is a full rank matrix of coordinates. So, first assume the dimension $p$ is fixed and set to $M$. Now suppose $X(t)$ is an ofBm under the conditions of Theorem \ref{t:H-hat_m_is_consistent}. Next set the matrix ${\mathbf P}(N)$ to the $M \times M$ identity matrix. In addition, suppose the noise term $Z(t)$ is identically zero. In order to establish \eqref{e:lim_n_a*lambda/a^(2h+1)}, it now suffices to follow the argument of the proof of Theorem 3.1 in \cite{abry:boniece:didier:wendt:2022:prob}. The convergence \eqref{e:lim_n_a*lambda(EW)/a^(2h+1)} can be established by an analogous argument. This shows $(i)$.
\end{proof}

\begin{remark}
\label{remA}
The framework for proving Theorem \ref{t:H-hat_m_is_consistent} provides a great deal of information about the asymptotic rescaled eigenvalues $\xi_m(2^j)$, $m \in \{1,\hdots,M\}$, as in \eqref{e:lim_n_a*lambda/a^(2h+1)}. More specifically, for some function $\varphi$ and some vector $\widetilde{{\mathbf w}}$, we can write
\begin{equation}\label{e:plim_lambda_m/a^(2Hm+1)=varphi(tau)}
\plim_{N \rightarrow \infty}\frac{\hat{\lambda}^{(\tau)}_{m}(a(N)2^j)}{a(N)^{2H_m+1}} = \xi_m(2^j) = \varphi(\widetilde{{\mathbf w}}).
\end{equation}
In order to describe $\varphi(\widetilde{{\mathbf w}})$, we need a few definitions and constructs. So, for the fixed $m$, define the sets of indices $\mathcal{I}_- := \{\ell: H_\ell < H_m\}$ and
\begin{equation} \label{e:def_indexsets}
\mathcal{I}_0 := \{\ell: H_\ell = H_m\} \textnormal{ \textit{and} } \mathcal{I}_+ := \{\ell: H_\ell > H_m\}.
\end{equation}
Note that $\mathcal{I}_-$ and $\mathcal{I}_+$ are possibly empty. Also, let
\begin{equation}\label{e:r1,r2,r3}
r_1:= \textnormal{card}(\mathcal{I}_-), \hspace{1mm}r_2:= \textnormal{card}(\mathcal{I}_0) \geq 1, \hspace{1mm} r_3:= \textnormal{card}(\mathcal{I}_+)
\end{equation}
be their respective cardinalities. For every $\tau = 1,\hdots, 2^{j^0_2-j}$, define the auxiliary wavelet random matrix
\begin{equation}\mathbf{\widehat{B}}^{(\tau)}_a(2^j):= {\mathbf W}^{-1}a^{-\HH-\frac{1}{2}\mathbb{I}}{\mathbf S}^{(\tau)}\big(a2^j\big)a^{-\HH^*-\frac{1}{2}\mathbb{I}}({\mathbf W}^*)^{-1},
\end{equation}
where $a \equiv a(N)$. Also, let $\mathbf{B}(2^j) = \bbE \mathbf{B}_a(2^j)$, which neither depends on $\tau$ nor on $a$. Recast
\begin{equation}\label{e:B(2^j)=(B_i,ell)}
\widehat {\mathbf B}^{(\tau)}_a(2^j) =(\widehat {\mathbf B}_{i\ell})_{i,\ell=1,2,3}, \quad {\mathbf B}(2^j) =(\mathbf B_{i\ell})_{i,\ell=1,2,3}.
\end{equation}
In \eqref{e:B(2^j)=(B_i,ell)}, ${\mathbf B}_{i\ell}$ and $\widehat {\mathbf B}_{i\ell}$ denote blocks of size $r_i\times r_\ell$, and the scale $2^j$ is omitted wherever doing so is unambiguous. Similarly, define
\begin{equation}\label{e:P=(P1(N)_P2(N)_P3(N))}
\mathbf W  = \big(\mathbf W_1 ~\mathbf W_2 ~\mathbf W_3\big),
\end{equation}
where, for $i = 1,2,3$, $\mathbf W_i \in {\mathcal M}(p,r_i,\bbR)$, and each column vector is denoted by $\mathbf w_\ell$, $\ell= 1,\hdots,r$.
Let ${\boldsymbol \Pi}_3 \in {\mathcal M}(r,\bbR)$ be the projection matrix onto $\textnormal{Ker}\{{\mathbf W}^{*}_{3}\}$. Also, let
${\mathbf M} = {\mathbf B}_{22} - {\mathbf B}_{23}{\mathbf B}^{-1}_{33}{\mathbf B}_{32} \in {\mathcal S}(r_2,\bbR)$ and ${\boldsymbol \Lambda} =
{\boldsymbol \Pi}^*_3 {\mathbf W}_2 {\mathbf M} {\mathbf W}^*_2{\boldsymbol \Pi}_3\in {\mathcal S}_{\geq 0}(r,\bbR)$. For unit vectors ${\mathbf u} \in \bbR^r$, define the function
\begin{equation}\label{e:varphi(u)=varphi(x,u)}
\varphi({\mathbf u}) := {\mathbf u}^* {\boldsymbol \Lambda}{\mathbf u}.
\end{equation}
Then, for $\varphi$ as in \eqref{e:varphi(u)=varphi(x,u)}, the proof of Proposition 5.1 and Lemma B.1 in \cite{abry:boniece:didier:wendt:2022:prob} imply that there exists a possibly random vector
\begin{equation}\label{e:w_in_Lambda}
\widetilde{{\mathbf w}} = \widetilde{{\mathbf w}}(\omega) \in \textnormal{span}\{{\boldsymbol \Lambda}\} = \textnormal{span}\{{\mathbf W}_2, {\mathbf W}_3\} \cap \textnormal{Ker}\{{\mathbf W}^{*}_3\}
\end{equation}
based on which we can express the limit \eqref{e:plim_lambda_m/a^(2Hm+1)=varphi(tau)} as $\varphi(\widetilde{{\mathbf w}})$ (\textbf{n.b.}: $\xi_m(2^j)$ is \textnormal{deterministic}). In particular, relations \eqref{e:varphi(u)=varphi(x,u)} and \eqref{e:w_in_Lambda} imply that $\varphi({\mathbf w})$ -- and, hence, $\xi_m(2^j)$ -- only depends on the matrices ${\mathbf B}_{22}$, ${\mathbf B}_{23}$, ${\mathbf B}_{33}$, ${\mathbf W}_2$ and ${\mathbf W}_3$. Consequently, the asymptotic rescaled function in \eqref{e:lim_n_a*lambda/a^(2h+1)} only depends on parameters associated with the index sets ${\mathcal I}_0$ and ${\mathcal I}_+$ in \eqref{e:def_indexsets}.
\end{remark}

\subsection{Proof of asymptotic normality}
\label{sec:appendixB}

In the proof of Theorem \ref{t:scale_invariance_of_the_distribution}, we write $a \equiv a(N)$ whenever notationally convenient. Also, we make use of the following notation. Let $\{X_N\}_{N \in \bbN}$, $\{Y_N\}_{N \in \bbN}$ be two sequences of random variables. We write $X_N \stackrel{d}\sim Y_N$, $N \rightarrow \infty$, when the two sequences have the same limit in distribution, which is assumed to be well defined.

\begin{proof}
First, we show $(i)$. Fix $j \in \{j^0_1,\hdots,j^0_2\}$. The proof of \eqref{e:scale_invariance_of_the_distribution} is by means of an adaptation of the proof of Theorem 3.2 in \cite{abry:boniece:didier:wendt:2022:prob}. In fact, let $\widehat {\mathbf B}_a(2^j)$ and ${\mathbf B}(2^j)$ be as in \eqref{e:B(2^j)=(B_i,ell)}. Under conditions (OFBM1-3), as a consequence of Theorem 3.1 in \cite{abry2018waveleta},
\begin{equation}\label{e:sqrt(n_a,j2)(B^w-hat-B(2^j))->asympt_Gauss}
\Big\{\sqrt{n_{a,j^0_2}}\hspace{0.5mm}\textnormal{vec}_{{\mathcal S}}\big(\mathbf{\widehat{B}}^{(\tau)}_a(2^j)-\mathbf{B}(2^j)\big)\Big\}_{\tau = 1,\hdots, 2^{j^0_2-j}} \stackrel{d}\rightarrow {\mathcal N}(0,\mathbf{\Sigma}_{\mathbf{B}})
\end{equation}
as $n \rightarrow \infty$ for some $\mathbf{\Sigma}_{\mathbf{B}} \in {\mathcal S}_{\geq 0}\big(2^{j^0_2-j}\cdot M(M+1)/2,\bbR\big)$. In particular,
\begin{equation}\label{e:B^w-hat->B(2^j)}
\mathbf{\widehat{B}}^{(\tau)}_a(2^j) - \mathbf{B}(2^j) \stackrel{\bbP} \rightarrow 0, \quad \tau = 1,\hdots, 2^{j^0_2-j}.
\end{equation}
So, fix $m \in \{1,\hdots, M\}$. Recall that, for any matrix $\mathds{M} \in {\mathcal S}(M,\bbR)$, the differential of a \textit{simple} eigenvalue $\lambda_m(\mathds{M})$ exists in a connected vicinity of $\mathds{M}$ and is given by
\begin{equation}\label{e:d_lambda}
d \lambda_m(\mathds{M}) = \mathbf{u}^*_m \{d\hspace{0.25mm}\mathds{M}\}\mathbf{u}_m,
\end{equation}
where $\mathbf{u}_m$ is a unit eigenvector of $\mathds{M}$ associated with $\lambda_m(\mathds{M})$ (see \cite{magnus:1985}, p.\ 182, Theorem 1). Now, for every $\mathbf{B} \in {\mathcal S}_{\geq 0}(M,\bbR)$, define the function $f_{N,m}$ by
\begin{equation}\label{e:f_N,m}
f_{N,m}(\mathbf{B}) = \log_2 \lambda_m \Big( \frac{{\mathbf W} a^{\diag(\Hv)} \mathbf{B}
a^{\diag(\Hv)} {\mathbf W}^*}{a^{2H_m}}\Big).
\end{equation}
For any fixed $\delta > 0$, let ${\mathcal O}_{\delta} = \{\mathbf{B} \in {\mathcal S}_{\geq 0}(M,\bbR): \|\mathbf{B}-\mathbf{B}(2^j)\| < \delta \}$. By Proposition 3.1 in \cite{abry2018waveleta}, $\det \mathbf{B}(2^j) > 0$. Thus, an application of the proof of Theorem 3.2 in \cite{abry:boniece:didier:wendt:2022:prob} implies that there exists some $\delta_0 > 0$ such that, for large enough $N$ and for any $\mathbf{B} \in {\mathcal O}_{\delta_0}$, the mean value theorem-type relation
\begin{equation}\label{e:by_the_MVT_AbryDidier2018_wavereg}
\begin{split}
&f_{N,m}(\mathbf{B})-f_{N,m}(\mathbf{B}(2^j))  \\
& = \sum_{1 \leq i_1 \leq i_2 \leq M}\frac{\partial}{\partial b_{i_1,i_2}}
f_{N,m}(\mathbf{\breve{B}}) \cdot \pi_{i_1,i_2}\big(\mathbf{B} - \mathbf{B}(2^j)\big)
\end{split}
\end{equation}
holds for some matrix $\mathbf{\breve{B}} \in {\mathcal S}_{> 0}(M,\bbR)$ lying in a segment connecting $\mathbf{B}$ and $\mathbf{B}(2^j)$ across ${\mathcal S}_{> 0}(M,\bbR)$. Define the event $A_N = \{\omega:\mathbf{\widehat{B}}^{(\tau)}_a(2^j) \in {\mathcal O}_{\delta_0}, \hspace{0.5mm}\tau = 1,\hdots,2^{j^0_2-j} \}$. By \eqref{e:B^w-hat->B(2^j)}, $\bbP(A_N) \rightarrow 1$, $N \rightarrow \infty$. Hence, by \eqref{e:by_the_MVT_AbryDidier2018_wavereg}, with probability going to 1, for large enough $N$, for $\tau = 1,\hdots,2^{j^0_2-j}$ and $m = 1,\hdots,M$, expansion \eqref{e:by_the_MVT_AbryDidier2018_wavereg} holds with $\mathbf{B}=\mathbf{\widehat{B}}^{(\tau)}_a(2^j), \mathbf{\breve{B}}= \mathbf{\breve{B}}^{(\tau)}_a(2^j) \in {\mathcal S}_{> 0}(M,\bbR)$. In addition, we can express, for any $i_1,i_2 \in \{1,\hdots,M\}$,
\begin{equation}\label{e:d/d_bi1,i2_fN,m}
\begin{split}
& \frac{\partial}{\partial b_{i_1,i_2}}
f_{N,m}(\mathbf{\breve{B}}^{(\tau)}_a(2^j))  \\
&\qquad = \frac{1}{\mathrm{ln} \, 2}
\frac{a^{2H_m+1}}{\lambda_{m}\big(\mathbf{\breve{S}}^{(\tau)}(a2^j)\big)}\frac{\partial}{\partial b_{i_1,i_2}}
\lambda_{m}\Big(\frac{\mathbf{\breve{S}}^{(\tau)}(a2^j)}{a^{2H_m+1}}\Big),
\end{split}
\end{equation}
where $\mathbf{\breve{S}}^{(\tau)}(a2^j) := {\mathbf W} a^{\diag(\Hv)}\mathbf{\breve{B}}^{(\tau)}_a(2^j)a^{\diag(\Hv)}{\mathbf W}^*$. Expressions \eqref{e:by_the_MVT_AbryDidier2018_wavereg} and \eqref{e:d/d_bi1,i2_fN,m} imply that, as $N \rightarrow \infty$,
\begin{equation}\label{e:log2_sqrt(n_a,j)(fN,m(B-hat_a(2^j))-fN,m(B_a(2^j)))}
\begin{aligned}
&\mathrm{ln} \, 2 \hspace{0.5mm} \sqrt{n_{a,j^0_2}}\big(f_{N,m}(\widehat{{\mathbf B}}^{(\tau)}_{a} (2^j)) - f_{N,m}({\mathbf B}_a(2^j)) \big) \\
& \quad \stackrel{d}\sim \frac{\sqrt{n_{a,j^0_2}}}{\xi_{m}(2^j)}\sum^{M}_{i_1,i_2 = r_1+1}\langle {\mathbf w}_{i_1}(N),{\mathbf u}_m(N)\rangle a^{H_{i_1}-H_m} \\
& \cdot  \langle {\mathbf w}_{i_2}(N),{\mathbf u}_m(N)\rangle a^{H_{i_2}-H_m}\hspace{1mm} \pi_{i_1,i_2}\big(\widehat{{\mathbf B}}^{(\tau)}_{a}(2^j) - {\mathbf B}_a(2^j)\big) \\
& \quad \stackrel{d}\sim \frac{\sqrt{n_{a,j^0_2}}}{\xi_{m}(2^j)}\Big\{\sum^{r_1+r_2}_{i_1,i_2 = r_1+1} + 2 \sum^{r_1+r_2}_{i_1 = r_1+1}\sum^{M}_{i_2 = r_1+r_2+1} \\
&\qquad+ \sum^{M}_{i_1,i_2 = r_1+r_2+1}\Big\} \langle {\mathbf w}_{i_1}(N),{\mathbf u}_m(N)\rangle a^{H_{i_1}-H_m} \\
\cdot \langle & {\mathbf w}_{i_2}(N),{\mathbf u}_m(N)\rangle a^{H_{i_2}-H_m}\hspace{0.5mm}\pi_{i_1,i_2}\big(\widehat{{\mathbf B}}^{(\tau)}_{a}(2^j) - {\mathbf B}(2^j)\big). 
\end{aligned}
\end{equation}
However, by Proposition B.1 in \cite{abry:boniece:didier:wendt:2022:prob}, under condition \eqref{e:H_ell1_neq_H_ell2_=>_xi_ell1_neq_xi_ell2} we can suppose that there exists a deterministic unit vector ${\mathbf u}_m$ such that
\begin{equation}\label{e:u_m(N)->u_m}
{\mathbf u}_m(N) \stackrel{\bbP}\rightarrow {\mathbf u}_m.
\end{equation}
So, let
\begin{equation}\label{e:gamma-m_vec}
{\boldsymbol \gamma}_m = \{ {\boldsymbol \gamma}_{m,i}\}_{i = 1,\hdots,r} :=\plim_{N \rightarrow \infty}\mathbf W^*{\mathbf u}_{m}(N) \in \bbR^{r}.
\end{equation}
Also, by Lemma B.8 in \cite{abry:boniece:didier:wendt:2022:prob}, we can assume the existence of the limiting vector
\begin{equation}
\begin{aligned}\label{e:x_*,m}
{\mathbf x}_{*,m} = & \{{\mathbf x}_{*,m,i}\}_{i \in {\mathcal I}_+}:= \\
& \plim_{N \rightarrow \infty}\big\{\langle \mathbf w_i ,{\mathbf u}_{m}(N)\rangle \cdot a(N)^{H_{i}-H_{m}} \big\}_{i \in {\mathcal I}_+} \in \bbR^{r_3}.
\end{aligned}
\end{equation}
Therefore, \eqref{e:log2_sqrt(n_a,j)(fN,m(B-hat_a(2^j))-fN,m(B_a(2^j)))} has the same limit in distribution as
\begin{equation}\label{e:rescaled_logeig_gamma_x}
\begin{aligned}
 \frac{\sqrt{n_{a,j^0_2}}}{\xi_{m}(2^j)} & \Bigg\{\sum^{r_1+r_2}_{i_1,i_2 = r_1+1} {\boldsymbol \gamma}_{m,i_1} \cdot {\boldsymbol \gamma}_{m,i_2} \\
+ 2 \sum^{r_1+r_2}_{i_1 = r_1+1}& \sum^{M}_{i_2 = r_1+r_2+1} {\boldsymbol \gamma}_{m,i_1}  \cdot {\mathbf x}_{*,m,i_2}\\
+\sum^{M}_{i_1,i_2 = r_1+r_2+1} {\mathbf x}_{*,m,i_1}\cdot & {\mathbf x}_{*,m,i_2}  \hspace{0.5mm}\Bigg\} \cdot \pi_{i_1,i_2}\big(\widehat{{\mathbf B}}^{(\tau)}_{a}(2^j) - {\mathbf B}(2^j)\big).
\end{aligned}
\end{equation}
Now note that \eqref{e:rescaled_logeig_gamma_x} further holds for all $\tau$, $m$ and $j$, and also that these expansions start from the same sequence of asymptotically Gaussian random matrices $\widehat{{\mathbf B}}^{(\tau)}_a(2^j)$ (see \eqref{e:sqrt(n_a,j2)(B^w-hat-B(2^j))->asympt_Gauss}). Hence, the jointly asymptotically normal limit \eqref{e:scale_invariance_of_the_distribution} holds. Thus, $(i)$ is proven.

Statement \eqref{e:H-hat_m_is_asymptotically_normal} can be proven by naturally adapting the proof of Theorem 3.2 in \cite{abry:boniece:didier:wendt:2022}, originally developed for the high-dimensional context (see the related comments in the proof of Theorem \ref{t:H-hat_m_is_consistent} in this paper). In particular, the definitions \eqref{e:a(N)} and \eqref{e:varpi_parameter} are applied in a mathematically sharp way. This establishes $(ii)$.
\end{proof}

\begin{remark}
\label{remB}
For a single $j$, we can show that the limit in distribution \eqref{e:scale_invariance_of_the_distribution} of wavelet log-eigenvalues is \textit{scale-free}. 
Consider the vector ${\boldsymbol \gamma}_m$ as in \eqref{e:gamma-m_vec} as well as the vector
\begin{equation}\label{e:varrho-m}
{\boldsymbol \varrho}_m := \mathbf B^{-1}_{33}(1)\mathbf B_{32}(1)\pi_{r_1+1,\hdots,r_1+r_2}({\boldsymbol \gamma}_m) \in \bbR^{r_3}.
\end{equation}
Assume for the moment that
\begin{equation}\label{e:Taylor_last_equivalence}
\begin{aligned}
\sqrt{n_{a,j^0_2}}\big\{ \log_2\hat \lambda_{m}^{(\tau)}(a(N)2^j)& - \log_2 \lambda_{m}^{(\tau)}(a(N)2^j) \big\} \\
\stackrel{d}\sim \frac{1}{\ln 2}\frac{\sqrt{n_{a,j^0_2}}}{\xi_{m}(1)} &  \Bigg\{\sum^{r_1+r_2}_{i_1,i_2 = r_1+1} {\boldsymbol \gamma}_{m,i_1}\cdot {\boldsymbol \gamma}_{m,i_2}  \\
+ 2 \sum^{r_1+r_2}_{i_1 = r_1+1} & \sum^{M}_{i_2 = r_1+r_2+1}  {\boldsymbol \gamma}_{m,i_1}\cdot  {\boldsymbol \varrho}_{m,i_2} \\
+\sum^{M}_{i_1,i_2 = r_1+r_2+1} {\boldsymbol \varrho}_{m,i_1}& \cdot {\boldsymbol \varrho}_{m,i_2}\Bigg\} \cdot \pi_{i_1,i_2}\big(\widehat{{\mathbf B}}^{(\tau)}_{a}(1) - {\mathbf B}(1)\big).
\end{aligned}
\end{equation}
In light of \eqref{e:sqrt(n_a,j2)(B^w-hat-B(2^j))->asympt_Gauss}, it is clear that \eqref{e:Taylor_last_equivalence} converges to a Gaussian law that does not depend on $j$, as claimed.

To establish \eqref{e:Taylor_last_equivalence}, recast entry-wise ${\mathbf B}(2^j)=: \big(b_{\ell_1,\ell_2}(2^j)\big)_{\ell_1,\ell_2 = 1,\hdots,r}$ (cf.\ \eqref{e:B(2^j)=(B_i,ell)}). By Proposition 3.1, (P5), in \cite{abry2018waveleta},
\begin{equation}\label{e:B(2^j)=scaling}
{\mathbf B}(2^j)= \big((2^j)^{H_{\ell_1}+H_{\ell_2}+1}\hspace{0.5mm}b_{\ell_1,\ell_2}(1)\big)_{\ell_1,\ell_2 = 1,\hdots,r}.
\end{equation}
Bearing in mind \eqref{e:u_m(N)->u_m} and \eqref{e:varrho-m}, by the proofs of Lemma B.8 and Proposition 5.1 in \cite{abry:boniece:didier:wendt:2022:prob}, and by \eqref{e:B(2^j)=scaling}, we can recast \eqref{e:x_*,m} in the form
\begin{equation}\label{e:x*,m_reexpressed}
\begin{aligned}
{\mathbf x}_{m,*} & = - {\mathbf B}^{-1}_{33}(2^j){\mathbf B}_{32}(2^j) {\mathbf W}^*_2{\mathbf u}_m \\
& = - (2^j)^{H_{m}}
\textnormal{diag}\big((2^j)^{- H_{r_1+r_2+1}},\hdots,(2^j)^{-H_{r}} \big)\\
& \qquad \mathbf B^{-1}_{33}(1)\mathbf B_{32}(1)\mathbf W_2^*{\mathbf u}_m\\
 & = - (2^j)^{H_{m}} (2^j)^{-\textnormal{diag}(H_{r_1+r_2+1},\hdots,H_{r})} {\boldsymbol \varrho}_m.
\end{aligned}
\end{equation}
Rewrite, entry-wise,
\begin{equation}\label{e:Bhat^(tau)}
\widehat{{\mathbf B}}^{(\tau)}_a(2^j) = \big( \widehat{b}^{(\tau)}_{\ell_1,\ell_2}(2^j)\big)_{\ell_1,\ell_2=1,\hdots,r}.
\end{equation}
By an adaptation of the proof of Proposition 3.1, (P6), in \cite{abry2018waveleta} we obtain
\begin{equation}\label{e:B-hat^(tau)(2^j)=scaling}
\widehat{{\mathbf B}}^{(\tau)}_a(2^j) \stackrel{d}= \big((2^j)^{H_{\ell_1}+H_{\ell_2}+1}\hspace{0.5mm}\widehat{b}^{(\tau)}_{\ell_1,\ell_2}(1)\big)_{\ell_1,\ell_2 = 1,\hdots,r}.
\end{equation}
So, by \eqref{e:log2_sqrt(n_a,j)(fN,m(B-hat_a(2^j))-fN,m(B_a(2^j)))}, \eqref{e:B(2^j)=scaling}, \eqref{e:x*,m_reexpressed} and \eqref{e:B-hat^(tau)(2^j)=scaling}, as $N \rightarrow \infty$,
\begin{equation*}
\begin{aligned}
&\sqrt{n_{a,j^0_2}}\big\{ \log_2\hat \lambda_{m}^{(\tau)}(a(N)2^j)- \log_2 \lambda_{m}^{(\tau)}(a(N)2^j) \big\} \\
&\stackrel{d}\sim \frac{1}{\ln 2}\frac{\sqrt{n_{a,j^0_2}}}{\xi_{m}(2^j)}\Bigg\{\sum^{r_1+r_2}_{i_1 = r_1+1}\sum^{r_1+r_2}_{i_2 = r_1+1} {\boldsymbol \gamma}_{m,i_1}{\boldsymbol \gamma}_{m,i_2} \hspace{0.5mm}(2^j)^{2H_m+1}  \\
&\quad + 2 \sum^{r_1+r_2}_{i_1 = r_1+1}\sum^{M}_{i_2 = r_1+r_2+1}
 {\boldsymbol \gamma}_{m,i_1}  [(2^j)^{H_m}(2^j)^{-H_{i_2}}] \\
&\qquad  \quad {\boldsymbol \varrho}_{m,i_2} (2^j)^{H_m + H_{i_2}+1}\\
& \quad+\sum^{M}_{i_1 = r_1+r_2+1}\sum^{M}_{i_2 = r_1+r_2+1}
[(2^j)^{H_m}(2^j)^{-H_{i_1}}]{\boldsymbol \varrho}_{m,i_1} \\
 & \qquad \quad [(2^j)^{H_m}(2^j)^{-H_{i_2}}]{\boldsymbol \varrho}_{m,i_2}\hspace{0.5mm}(2^j)^{H_{i_1} + H_{i_2}+1}\hspace{0.5mm}\Bigg\} \\
&  \qquad \qquad \cdot \pi_{i_1,i_2}\big(\widehat{{\mathbf B}}^{(\tau)}_{a}(1) - {\mathbf B}(1)\big).
\end{aligned}
\end{equation*}
In view of the scaling relation in \eqref{e:lim_n_a*lambda/a^(2h+1)}, relation \eqref{e:Taylor_last_equivalence} holds, as claimed.
\end{remark}

\subsection{Approximations of log-eigenvalue covariances}
\label{sec:appendixC}

Fix $j \in \{j^0_1,\hdots,j^0_2\}$. In this appendix, we use the bivariate context to provide some justification for the very useful approximations 
\begin{align}
\label{eq:corrlambdaapprox}
& \mathrm{corr}(\log_2 \hat \lambda_m(a(N)2^j),\log_2 \hat \lambda_{m'}(a(N)2^j)) \approx 0, \\
\label{eq:corrlambdawapprox}
& \mathrm{corr}(\log_2 \hat \lambda_m^{(\tau)}(a(N)2^j),\log_2 \hat \lambda_{m'}^{(\tau)}(a(N)2^j)) \approx 0,
\end{align}
as well as
\begin{align}
\label{eq:varlambdaapprox}
& \Var(\log_2 \hat \lambda_m(a(N)2^j)) \approx 2(\log_2 e)^2/n_{a,j}, \\
\label{eq:varlambdawapprox}
& \Var(\log_2 \hat \lambda_m^{(\tau)}(a(N)2^j)) \approx 2(\log_2 e)^2/n_{a,j^0_2}.
\end{align}

So, consider the case $M = 2$ and $0 < H_1 < H_2 < 1$. Since the arguments for showing all relations \eqref{eq:corrlambdaapprox}-\eqref{eq:varlambdawapprox} are similar, calculations are detailed for the bias-corrected eigenvalues $ \hat \lambda_m^{(\tau)}$ only.
Let $\widehat{{\mathbf B}}_a(2^j)$ be as in \eqref{e:Bhat^(tau)} and further write entry-wise ${\mathbf B}(1) = \bbE \widehat{{\mathbf B}}_{a}(1) = \big\{b_{\ell_1 \ell_2}(1) \big\}_{\ell_1,\ell_2 = 1,2}$ (cf.\ \eqref{e:B(2^j)=(B_i,ell)}). Assume
\begin{equation}\label{e:b^2-12(1)>0}
b^{2}_{12}(1) > 0.
\end{equation}

In view of \eqref{e:Taylor_last_equivalence}, it can be shown that we can write the expansions
\begin{equation}\label{e:log_lambda1_expansion}
\begin{aligned}
& \sqrt{n_{a,j^0_2}} \big(\mathrm{ln} \hat \lambda^{(\tau)}_{1}(a(N)2^j) - \mathrm{ln} \lambda^{(\tau)}_{1}(a(N)2^j) \big) \\
& \quad \stackrel{d}\sim \frac{b_{11}(1)b_{22}(1)}{\det {\mathbf B}(1)} \sqrt{n_{a,j^0_2}} \cdot \frac{(\widehat{b}_{11}(1) - b_{11}(1))}{b_{11}(1)} \\
&  \qquad  + (-2) \frac{b^2_{12}(1)}{\det {\mathbf B}(1)} \sqrt{n_{a,j^0_2}} \cdot \frac{(\widehat{b}_{12}(1) - b_{12}(1))}{b_{12}(1)} \\
&  \qquad  + \frac{b^2_{12}(1)}{\det {\mathbf B}(1)} \sqrt{n_{a,j^0_2}} \cdot \frac{(\widehat{b}_{22}(1) - b_{22}(1))}{b_{22}(1)}
\end{aligned}
\end{equation}
and
\begin{equation}\label{e:log_lambda2_expansion}
\begin{aligned}
& \sqrt{n_{a,j^0_2}} \big(\mathrm{ln} \hat \lambda^{(\tau)}_{2}(a(N)2^j) - \mathrm{ln} \lambda^{(\tau)}_{2}(a(N)2^j) \big) \\
&\qquad  \qquad \qquad  \qquad \stackrel{d}\sim \sqrt{n_{a,j^0_2}} \cdot \frac{(\widehat{b}_{22}(1) - b_{22}(1))}{b_{22}(1)}
\end{aligned}
\end{equation}
(see relations (4.33) and (4.35) in \cite{abry2018waveleta}). 
So, for the sake of computation, we normalize $b_{11}(1) = 1 = b_{22}(1)$. Also, for notational simplicity, we write $X_{k} = \pi_1\big( \mathbf{W}^{-1}D(1,k)\big)$, $Y_{k} = \pi_2\big( \mathbf{W}^{-1}D(1,k)\big)$, $\forall k \in \integer$, where $\pi_1$ and $\pi_2$ denote the projection onto the $\ell$-coordinate. For ofBm, the celebrated property of approximate decorrelation in the wavelet domain is given by Proposition 3.2 in \cite{abry2018waveleta}. It implies that, for $k \neq k'$,
\begin{equation}\label{e:wavelet_decorr}
\bbE X_{k} X_{k'} \approx 0, \quad  \bbE Y_{k} Y_{k'} \approx 0, \quad \bbE X_{k} Y_{k'} \approx 0.
\end{equation}
In view of \eqref{e:wavelet_decorr}, of the stationarity of fixed-scale wavelet coefficients (see \cite{abry2018waveleta}, Proposition 3.1, $(P2)$) and of the Isserlis theorem, we obtain the approximations
\begin{equation}\label{e:Cov(b11,b22)}
\Cov\big(\widehat{b}_{11}(1),\widehat{b}_{22}(1)\big) \approx \frac{2 b^{2}_{12}(1)}{n_{a,j^0_2}},
\end{equation}
\begin{equation}\label{e:Cov(b22,b22)}
\Cov\big(\widehat{b}_{12}(1),\widehat{b}_{22}(1)\big)  \approx \frac{2 b_{12}(1)}{n_{a,j^0_2}}, \hspace{1mm}\Var\big(\widehat{b}_{22}(1)\big) \approx \frac{2}{n_{a,j^0_2}}.
\end{equation}
As a consequence of the expansions \eqref{e:log_lambda1_expansion} and \eqref{e:log_lambda2_expansion}, as well as of the approximations \eqref{e:Cov(b11,b22)} and \eqref{e:Cov(b22,b22)},
\begin{equation}
\begin{aligned}
 \Cov \Big(\sqrt{n_{a,j^0_2}} & \big(\mathrm{ln} \hat \lambda^{(\tau)}_{1}(a(N)2^j) - \mathrm{ln} \lambda^{(\tau)}_{1}(a(N)2^j)\big),\\
& \quad \sqrt{n_{a,j^0_2}} \big(\mathrm{ln} \hat \lambda^{(\tau)}_{2}(a(N)2^j) -\mathrm{ln} \lambda^{(\tau)}_{2}(a(N)2^j)\big) \Big) \\
\approx \frac{n_{a,j^0_2}}{\det {\mathbf B}(1)} & \Big\{ \frac{2 b^{2}_{12}(1)}{n_{a,j^0_2}} - 2 \cdot \frac{b^{2}_{12}(1)}{b_{12}(1)} \frac{2 b_{12}(1)}{n_{a,j^0_2}} + \frac{2 b^{2}_{12}(1)}{n_{a,j^0_2}}\Big\} = 0.
\end{aligned}
\end{equation}
Namely, $\log_2 \hat \lambda^{(\tau)}_{1}(a(N)2^j)$ and $\log_2 \hat \lambda^{(\tau)}_{2}(a(N)2^j)$ are approximately decorrelated over large scales for large values of $n_{a,j^0_2}$.

Alternatively, suppose condition \eqref{e:b^2-12(1)>0} is dropped, namely, assume $b^{2}_{12}(1) = 0$. Then, after appropriately eliminating null denominators, the expansions \eqref{e:log_lambda1_expansion} and \eqref{e:log_lambda2_expansion} and the approximation \eqref{e:Cov(b11,b22)} immediately show that the log-eigenvalues are approximately decorrelated.\vspace{2mm}

Next, we develop approximations for the variances. Similarly to \eqref{e:Cov(b11,b22)} and \eqref{e:Cov(b22,b22)}, we obtain
\begin{align}
\Cov\big(\widehat{b}_{11}(1),& \widehat{b}_{12}(1)\big) \approx \frac{2 b_{12}(1)}{n_{a,j^0_2}}, \\
\Var\big(\widehat{b}_{11}(1)\big) \approx \frac{2}{n_{a,j^0_2}}, & \hspace{2mm}\Var\big(\widehat{b}_{12}(1)\big)  \approx \frac{ b^2_{12}(1)}{n_{a,j^0_2}} + \frac{1}{n_{a,j^0_2}}.
\end{align}

Hence,
\begin{equation}
\begin{aligned}
& \Var \Big(\sqrt{n_{a,j^0_2}} \big(\mathrm{ln} \hat \lambda^{(\tau)}_{1}(a(N)2^j) - \mathrm{ln} \lambda^{(\tau)}_{1}(a(N)2^j)\big) \Big) \\
& \qquad \approx \frac{n_{a,j^0_2}}{\det {\mathbf B}(1)^2} \Bigg[  \Var \hat b_{11}(1) + 4 b_{12}^2(1) \Var \hat b_{12}(1) \\
& +  b_{12}^4(1) \Var \hat b_{22}(1)  - 4 b_{12}(1)  \Cov(\hat b_{11}(1),\hat b_{12}(1)) \\
& + 2 b_{12}(1)^2  \Cov(\hat b_{11}(1),\hat b_{22}(1)) - 4 b_{12}^3(1)  \Cov(\hat b_{12}(1),\hat b_{22}(1)) \Bigg] \\
&\qquad =\frac{n_{a,j^0_2}}{(1-b^2_{12}(1))^2} \Bigg[  \frac{2}{n_{a,j^0_2}} +4 b^2_{12}(1) \left( \frac{b_{12}^2(1)}{n_{a,j^0_2}} + \frac{1}{n_{a,j^0_2}} \right) \\
&\qquad \qquad  +  b_{12}^4(1) \frac{2}{n_{a,j^0_2}} -4 b_{12}(1) \frac{2 b_{12}(1)}{n_{a,j^0_2}}  +2 b_{12}^2(1)  \frac{2 b^{2}_{12}(1)}{n_{a,j^0_2}} \\
&\qquad \qquad - 4 b_{12}^3(1)  \frac{2 b_{12}(1)}{n_{a,j^0_2}} \Bigg] = 2\\
\end{aligned}
\end{equation}
and
\begin{equation}
\begin{aligned}
& \Var \Big(\sqrt{n_{a,j^0_2}} \big(\mathrm{ln} \hat \lambda^{(\tau)}_{2}(a(N)2^j) - \mathrm{ln} \lambda^{(\tau)}_{2}(a(N)2^j)\big) \Big) \\
& \qquad \approx n_{a,j^0_2} \cdot \frac{2}{n_{a,j^0_2}} = 2.
\end{aligned}
\end{equation}

Analogously, we can show that
\begin{equation}
\Var \Big(\sqrt{n_{a,j}} \big(\mathrm{ln} \hat \lambda_{2}(a(N)2^j) - \mathrm{ln} \lambda_{2}(a(N)2^j)\big) \Big) \approx 2.
\end{equation}

\begin{remark}
\label{remC}
Relations \eqref{eq:corrlambdaapprox}-\eqref{eq:varlambdawapprox} provide approximations for the second moments of wavelet log-eigenvalues that depend neither on model parameters nor on the scale. Beyond these approximations, Remark~\ref{remB} shows rigorously and more broadly that the limit in distribution \eqref{e:scale_invariance_of_the_distribution} of wavelet log-eigenvalues for a single $j$ is \textit{scale-free}.
\end{remark} 

\subsection{Additional covariance approximations for multivariate estimation}
\label{sec:appendixD}

By \eqref{eq:corrlambdaapprox}, pairs of multivariate estimates $\hat H^{\rm M}_m$ and $\hat H^{\rm M}_{m'}$, $m \neq m'$, are approximately decorrelated. On the other hand, by~\eqref{eq:corrlambdawapprox}, the same holds for pairs of their bias-corrected counterparts $\hat H^{\rm M,bc}_m$ and $\hat H^{\rm M,bc}_{m'}$. In summary, for $m \neq m'$,
$$
\mathrm{corr}(\hat H^{\rm M}_m,\hat H^{\rm M}_{m'}) \approx 0, \quad \mathrm{corr}(\hat H^{\rm M,bc}_m,\hat H^{\rm M,bc}_{m'}) \approx 0.
$$

On the other hand, by \eqref{eq:corrlambdawapprox}, we can further approximate
$$
\mathrm{Var}(2\hat H^{\rm M,bc}_m) = \mathrm{Var} \left( \sum_{j=j_1^0}^{j_2^0} w_j \overline{\log_2 \lambda_m}(a(N)2^j) \right)
$$
$$
\approx \sum_{j=j_1^0}^{j_2^0} w_j^2 \, \mathrm{Var} \left( \frac{1}{2^{j_2^0-j}} \sum_{\tau=1}^{2^{j_2^0-j}} \log_2 \hat \lambda_m^{(\tau)}(a(N)2^j) \right)
$$
$$
\approx  \sum_{j=j_1^0}^{j_2^0}  \frac{w_j^2}{2^{j_2^0-j}} \mathrm{Var} \left(  \log_2 \hat \lambda_m^{(1)}(a(N)2^j) \right)
$$
\begin{equation}
\approx  2 (\log_2 e)^2 \sum_{j=j_1^0}^{j_2^0}  2^{j-j_2^0}  \frac{w_j^2}{n_{a,j_2^0}}
=  \displaystyle 2(\log_2 e )^2 \sum_{j=j_1^0}^{j_2^0} \frac{w_j^2}{n_{a,j}},
\end{equation}
where the last line results from $n_{a,j} 2^j =  n_{a,j_2^0} 2^{j_2^0} $.

Similarly, by \eqref{eq:corrlambdaapprox},
\begin{align}
\mathrm{Var}(2 \hat H^{\rm M}_m) & \approx  2 (\log_2 e)^2 \sum_{j=j_1^0}^{j_2^0}   \frac{w_j^2}{n_{a,j}}. 
\end{align}

 \end{document}